\pgfplotsset{compat=newest}
\newtheorem{theorem}{Theorem}
\newtheorem{lemma}[theorem]{Lemma}
\newtheorem{observation}[theorem]{Observation}
\newtheorem{corollary}[theorem]{Corollary}
\newtheorem{proposition}[theorem]{Proposition}
\theoremstyle{definition}
\newtheorem{definition}[theorem]{Definition}
\newtheorem{rrule}{Rule}
\crefname{rrule}{Rule}{Rules}
\crefname{case}{Case}{Cases}
\newcommand{\prob}[1]{\textnormal{\textsc{#1}}}
\newcommand{\problemdef}[3]{
	\begin{center}
	\begin{minipage}{0.95\columnwidth}
		\noindent
		\prob{#1}
		\vspace{5pt}\\
		\setlength{\tabcolsep}{3pt}
		\begin{tabularx}{\textwidth}{@{}lX@{}}
			\textbf{Input:}     & #2 \\
			\textbf{Question:}  & #3
		\end{tabularx}
	\end{minipage}
	\end{center}
}
\DeclarePairedDelimiterX{\abs}[1]{\lvert}{\rvert}{#1}
\DeclarePairedDelimiterX{\norm}[1]{\lVert}{\rVert}{#1}
\DeclarePairedDelimiterX{\ceil}[1]{\lceil}{\rceil}{#1}
\newcommand{\NN}{\mathbb{N}}
\newcommand{\Wone}{\textnormal{\textrm{W[1]}}}
\newcommand{\coWone}{\textnormal{\textrm{coW[1]}}}
\newcommand{\Wtwo}{\textnormal{\textrm{W[2]}}}
\newcommand{\coWtwo}{\textnormal{\textrm{coW[2]}}}
\newcommand{\FPT}{\textnormal{\textrm{FPT}}}
\newcommand{\bigO}{\mathcal{O}}
\newcommand{\yes}{\textnormal{\texttt{yes}}}
\newcommand{\no}{\textnormal{\texttt{no}}}
\newcommand{\sside}{\ensuremath{\mathsf{ss}}}
\newcommand{\vc}{\mathrm{vc}}
\newcommand{\fvn}{\mathrm{fvn}}
\newcommand{\tw}{\mathrm{tw}}
\newcommand{\td}{\mathrm{td}}
\newcommand{\fen}{\mathrm{fen}}
\DeclareMathOperator{\opt}{opt}
\newcommand{\oneto}[1]{[ #1 ]} %
\newcommand{\ie}{i.\,e.,\ }
\newcommand{\BDD}{\prob{BDD}}
\newcommand{\BFVD}{\prob{BFVD}}
\author[Lito Goldmann, Leon Kellerhals, and Tomohiro Koana]{%
	Lito Goldmann\affiliationmark{1}
	\and Leon Kellerhals\affiliationmark{1}
	\and Tomohiro Koana\affiliationmark{2}\thanks{Work was done while affiliated with TU Berlin. Supported by the DFG Project DiPa, Ni 369/21.}
}
\title[Structural Parameterizations of Biclique-Free Deletion]{Structural Parameterizations of the Biclique-Free Vertex Deletion Problem}
\affiliation{
	Technische Universität Berlin, Germany\\
	Utrecht University, The Netherlands
}
\keywords{
	Fixed-parameter tractability,
	Kernelization,
	Structural graph parameterizations,
	Biclique-free graphs
}
\begin{document}
\publicationdata{vol. 26:3}{2024}{4}{10.46298/dmtcs.13018}{2024-02-07; 2024-02-07; 2024-07-16}{2024-08-30}

\maketitle

\begin{abstract}
	In this work, we study the Biclique-Free Vertex Deletion problem:
	Given a graph $G$ and integers $k$ and $i \le j$, find a set of at most $k$ vertices that intersects every (not necessarily induced) biclique $K_{i, j}$ in $G$.
	This is a natural generalization of the Bounded-Degree Deletion problem, wherein one asks whether there is a set of at most $k$ vertices whose deletion results in a graph of a given maximum degree $r$.
	The two problems coincide when $i = 1$ and $j = r + 1$.
	We show that Biclique-Free Vertex Deletion is fixed-parameter tractable with respect to $k + d$ for the degeneracy $d$ by developing a $2^{\bigO(d k^2)} \cdot n^{\bigO(1)}$-time algorithm.
	We also show that it can be solved in $2^{\bigO(f k)} \cdot n^{\bigO(1)}$ time for the feedback vertex number $f$ when $i \ge 2$.
	In contrast, we find that it is W[1]-hard for the treedepth for any integer $i \ge 1$.
	Finally, we show that Biclique-Free Vertex Deletion has a polynomial kernel for every $i \ge 1$ when parameterized by the feedback edge number.
	Previously, for this parameter, its fixed-parameter tractability for $i = 1$ was known (Betzler et al., 2012) but the existence of polynomial kernel was open.
\end{abstract}

\section{Introduction}

The \textsc{$\mathcal G$-Vertex Deletion} problem, which, for a graph class $\mathcal G$, asks whether a given graph $G$ can be turned into a graph $G' \in \mathcal G$ by deleting at most $k$ vertices,
is arguably one of the most pervasive and general graph theoretical problems.
In this work, we focus on the class of \emph{biclique-free} graphs, which has received considerable attention from algorithmic perspectives \citep{DBLP:journals/jcss/LokshtanovMPRS18,DBLP:journals/siamdm/EibenKMPS19,DBLP:journals/tcs/TelleV19,DBLP:conf/stacs/FabianskiPST19,DBLP:journals/algorithmica/AboulkerBKS23,DBLP:conf/stacs/KoanaKNS22}.
For $i, j \in \NN$, let $K_{i,j}$ denote the complete bipartite graph on $i$ vertices on one side and $j$ vertices on the other side.
We consider the following problem.

\problemdef{Biclique Free Vertex Deletion (\BFVD)}
	{An undirected graph~$G$ and $i,j,k \in \NN$, $i \le j$.}
	{Does there exist a subset~$V' \subseteq V$ with~$|V'|\leq k$ such that~$G - V'$ does not contain any~$K_{i,j}$ as a (not necessarily induced) subgraph?}

Note that we consider $i$ and $j$ to be part of the input, that is, they are not to be treated as a constant.
Hence, \BFVD{} is a generalization of the \textsc{Bounded Degree Deletion} problem, defined as follows.

\problemdef{Bounded Degree Deletion (\BDD{})}
	{An undirected graph~$G$ and $r, k \in \NN$}
	{Is there a subset~$V'\subseteq V$ with~$\abs{V'}\le k$ such that each vertex in~$G-V'$ has degree at most~$r$?}
Note that an instance~$(G,k,r)$ of \BDD{} is a \yes-instance, if and only if the instance~$(G,1,r+1,k)$ of \BFVD{} is a \yes-instance.
Furthermore, note that the special case~$r=0$ of~\BDD{}, that is,~\BFVD{} with~$i=j=1$, is~\textsc{Vertex Cover}.

\BDD{} appears in the field of computational biology~\citep{Fellows2011}.
Its dual, the so-called \textsc{$k$-Plex Deletion} problem, is a clique relaxation problem that finds many applications in social network analysis \citep{DBLP:journals/ior/BalasundaramBH11,DBLP:journals/jco/MoserNS12,DBLP:journals/jco/McCloskyH12,Seidman1978AGG}.
Hence, it is not surprising that its computational complexity has been studied extensively in the last two decades~\citep{DBLP:journals/ol/BalasundaramCT10,Betzler2012,DBLP:journals/iandc/BodlaenderF01,DBLP:conf/aaim/ChenFFJLWZ10,Dessmark1993,DBLP:journals/tcs/KomusiewiczHMN09,Nishimura2005,Seidman1978AGG}.
Its parameterized complexity has been studied as well: \BDD{} is fixed-parameter tractable (\FPT{}) with respect to $r+k$ \citep{Fellows2011,DBLP:journals/jco/MoserNS12,Nishimura2005}, but \Wtwo-hard with respect to~$k$ \citep{Fellows2011}.
As for structural parameterizations, \BDD{} is known to be \FPT{} with respect to the degeneracy plus $k$ \citep{DBLP:conf/cocoa/RamanSS08} and with respect to the feedback edge number and to the treewidth plus $r$ \citep{Betzler2012}.
Recently, \citet{Ganian2021} showed that \BDD{} is \Wone-hard when parameterized by the feedback vertex number or the treedepth,\footnote{see \cref{sec:prelim} for a definition of the parameters} but \FPT{} when parameterized by the treecut width \citep{DBLP:journals/siamdm/MarxW14}.
\citet{lampis2023structural} proved some fine-grained conditional lower bounds for \BDD{} with respect to treewidth and vertex cover number.

Allowing $i$ and $j$ to be part of the input makes \BFVD{} challenging to solve.
Deciding whether the input graph $G$ is free of bicliques $K_{i,j}$ is challenging on its own.
In fact, the problem of determining whether $G$ contains a biclique is NP-hard and \Wone-hard with respect to $i + j$.
Thus, \BFVD{} is coNP-hard and \coWtwo-hard for $i + j$ even if $k = 0$ \citep{DBLP:journals/jacm/Lin18}.
On degenerate graphs however, one can efficiently enumerate all maximal bicliques \citep{eppstein1994arboricity}.
For this reason, and in order to see which results for \BDD{} also hold for its generalization, we study the computational tractability of \BFVD{} with respect to structural graph parameters.

\paragraph{Our results.}

We first show in \Cref{sec:prelim} that \BFVD{} can be solved in $\bigO^*(2^{\bigO(\vc \cdot k)})$ time, where $\vc$ is the minimum vertex cover size of $G$.
This paves the way for the algorithms presented in \Cref{sec:d,sec:fvn}.
Using the $\bigO^*(2^d)$-time algorithm of \citet{eppstein1994arboricity}, where $d$ is the degeneracy of $G$, to enumerate all maximal bicliques, we show that each vertex and edge not part of any biclique $K_{i,j}$ can be identified (and deleted) in time $\bigO^*(4^d)$.
When every edge is part of some biclique $K_{i,j}$, the set of vertices that appear in the smaller side of some biclique $K_{i,j}$ form a vertex cover.
In \Cref{sec:d}, we show that \BFVD{} can be solved in $\bigO^*(2^{\bigO(dk^2)})$ time.
The algorithm takes a win-win approach:
If there are not many vertices that appear in the smaller side of a biclique, then we use the aforementioned $\bigO^*(2^{\bigO(\vc \cdot k)})$-time algorithm.
Otherwise, we can find a set of vertices which has a nonempty intersection with every solution.
Following the same approach albeit with a more refined analysis, we develop in \Cref{sec:d} an algorithm for \BFVD{} running in $\bigO^*(2^{\bigO(k^2 + \fvn \cdot k)})$ time.
That actually implies that \BFVD{} is fixed-parameter tractable for $\fvn$ when $i \ge 2$ since an instance with $k \ge \fvn$ is a \yes-instance.
In contrast, we show in \Cref{sec:hardness} that \BFVD{} is \Wone-hard for every $i \in \NN$ when parameterized by treedepth.
To the best of our knowledge, \BFVD{} is the first problem shown to be FPT for the feedback vertex number but \Wone-hard for the treedepth.
Incidentally, there are several problems that behave in the opposite way, \ie{} are FPT for the treedepth but \Wone-hard for the feedback vertex number such as \textsc{Mixed Chinese Postman} \citep{gutin2016mixed}, \textsc{Geodetic Set} \citep{Kellerhals2020}, and \textsc{Length-Bounded Cut} \citep{bentert2022length,dvovrak2018parameterized}.
Finally, we show in \Cref{ssec:bdd_fen} that \BFVD{} admits a polynomial kernel for the feedback edge number, strengthening the fixed-parameter tractability result of \citet{Betzler2012}.

\section{Preliminaries}
\label{sec:prelim}

Let~$\NN$ be the set of positive integers.
For~$n \in \NN$, let~$\oneto{n} \coloneqq \{1, 2, \dots, n\}$.

\paragraph{Graphs.}
For standard graph terminology, we refer to \citet{Diestel2017}.
For a graph $G$, let $V(G)$ denote its vertex and let $E(G)$ denote its edge set.
Let $X \subseteq V(G)$ be a vertex set. 
We denote by $G[X]$ the subgraph induced by $X$ and let $G - X$ be $G[V(G) \setminus X]$.
For an edge set $F \subseteq E(G)$, we denote by $G \setminus F$ the graph $(V(G), E \setminus F)$.
Let $N_G(X) = \{ y \mid x \in X, xy \in E(G) \} \setminus X$ and $N_G[X] = N_G(X) \cup X$.
We drop the subscript $\cdot_G$ when not ambiguous.
For simpler notation, we sometimes use $x$ for $\{ x \}$.

For fixed $i \le j \in \mathbb{N}$, we say that a pair $(S, T)$ of disjoint vertex sets is a biclique if $|S| = i$, $|T| = j$, and $st \in E(G)$ for every $s \in S$ and $t \in T$.
We refer to $S$ as the \emph{smaller side} and $T$ as the \emph{larger side}.
If $|\bigcap_{s \in S} N(s)| \ge j$, then let $(S, \cdot)$ denote an arbitrary biclique $(S, T)$ with $|T| = j$.
Let $\mathcal{S}_G$ be the collection of smaller sides of all bicliques $K_{i,j}$ of $G$ and let $\sside(G) = |\bigcup_{S \in \mathcal{S}_G} S|$.
Whenever $i$ and $j$ are clear from context, we allow ourselves to just call a $K_{i,j}$ just \emph{biclique}.

\paragraph{Graph parameters.}
The \emph{vertex cover number} $\vc(G)$ is the size of a smallest set~$V' \subseteq V(G)$ such that~$G - V'$ is edgeless.
A set $F \subseteq E(G)$ is a \emph{feedback edge set} if $G \setminus F$ is a forest.
The \emph{feedback edge number} $\fen(G)$ is the size of a smallest such set.
A set $D \subseteq V(G)$ is a \emph{feedback vertex set} if $G-D$ is a forest.
The \emph{feedback vertex number} $\fvn(G)$ is the size of a smallest such set.

For a graph $G$, a \emph{tree decomposition} is a pair $(T, B)$, where $T$ is a tree and $B \colon V(T) \to 2^{V(G)}$ such that
(i) for each edge~$uv \in E(G)$ there exists $x \in V(T)$ with $u, v \in B(x)$, and
(ii) for each $u \in V(G)$ the set of nodes $x \in V(T)$ with $v \in B(x)$ forms a nonempty, connected subtree in $T$.
The \emph{width} of $(T, B)$ is $\max_{x \in V(T)} (\abs{B(x)-1})$.
The \emph{treewidth} $\tw(G)$ of $G$ is the minimum width of all tree decompositions of~$G$.
The \emph{treedepth} of a connected graph $G$ is defined as follows~\citep{Nesetril2006}.
Let $T$ be a rooted tree with vertex set~$V(G)$, such that if $uv \in E(G)$, then $u$ is either an ancestor or a descendant of $v$ in $T$, i.e., the path from $u$ to $v$ in $T$ does not contain the root as an inner vertex.
We say that $G$ is \emph{embedded} in $T$.
The \emph{depth} of $T$ is the number of vertices in a longest path in $T$ from the root to a leaf.
The \emph{treedepth} $\td(G)$ of $G$ is the minimum $t$ such that there is a rooted tree of depth $t$ in which $G$ is embedded.

See \Cref{fig:parameters} for the relationship between parameters.
Throughout this paper, for any of the parameters $\mathrm{x}(G)$ introduced above, we allow ourselves to simply write $\mathrm{x}$ if the graph $G$ is clear from context.

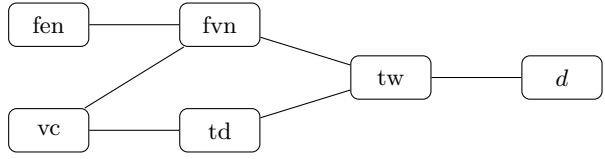
\begin{figure}
	\centering
	\begin{tikzpicture}[yscale=.7, xscale=1.5]
		\tikzset{
			param/.style={draw, fill=white, rectangle, rounded corners=3, font=\small, minimum width=3em, minimum height=3.8ex},
		}

		\node[param] at (0, 0) (vc) {$\vc$};
		\node[param] at (0, 2) (fen) {$\fen$};
		\node[param] at (1.5, 0) (td) {$\td$};
		\node[param] at (1.5, 2) (fvn) {$\fvn$};
		\node[param] at (3, 1) (tw) {$\tw$};
		\node[param] at (4.5, 1) (d) {$d$};

		\draw (fen) -- (fvn) -- (tw) -- (d);
		\draw (vc) -- (td) -- (tw);
		\draw (vc) -- (fvn);
	\end{tikzpicture}
	\caption{A Hasse diagram of parameters we study in this paper.
	An edge from $x$ (left) to $y$ (right) indicates that $x(G) \ge y(G) - 1$ for every graph $G$.
	}
	\label{fig:parameters}
\end{figure}

\subparagraph{Parameterized complexity.}
A \emph{parameterized problem} is a subset $L \subseteq \Sigma^* \times \NN$ over a finite alphabet~$\Sigma$.
Let $f \colon \mathbb N \to \mathbb N$ be a computable function.
A parameterized problem $L$ is \emph{fixed-parameter tractable (in \FPT)} with respect to $k$ if $(I, k) \in L$ is decidable in $f(k) \cdot \abs{I}^{\bigO(1)}$ time.
A \emph{kernel} for this problem is an algorithm that takes the instance $(I, k)$ and outputs a second instance $(I', k')$ such that
(i) $(I, k) \in L$ if and only if $(I', k') \in L$ and
(ii) $\abs{I'}+k' \le f(k)$ for a computable function~$f$.
The \emph{size} of the kernel is $f$.
We call a kernel \emph{polynomial} if $f$ is a polynomial.
To show that a problem $L$ is (presumably) not in \FPT, one may use a \emph{parameterized reduction} from a problem that is hard for the class $\Wone$ to $L$.
A parameterized reduction from a parameterized problem $L$ to another parameterized problem $L'$ is a function that acts as follows:
For functions $f$ and $g$, given an instance $(I, k)$ of $L$, it computes in $f(k) \cdot \abs{I}^{\bigO(1)}$ time an instance~$(I', k')$ of $L'$ so that $(I, k) \in L$ if and only if $(I', k') \in L'$ and $k' \le g(k)$.
For more details on parameterized complexity, we refer to the standard monographs~\citep{bluebook,DF13,Niedermeier06}.

\paragraph{Preliminary results.}

Our algorithms will use the following simple reduction rule.

\begin{rrule}
	\label{rr:part_of_a_biclique}
	Delete a vertex or edge that is not part of any biclique $K_{i,j}$.
\end{rrule}

Note, however, that to apply \Cref{rr:part_of_a_biclique} exhaustively, one has to enumerate all bicliques efficiently.
Although it may take $\Omega(2^n)$ time to enumerate all bicliques $(S, T)$, it is known that all \emph{maximal} bicliques can be listed in $\bigO^*(2^d)$ time in $d$-degenerate graphs \citep{eppstein1994arboricity}.
In fact, we can enumerate all vertex sets that comprise the smaller sides of all bicliques $K_{i,j}$, which we denote by $\mathcal{S}_G$, in $\bigO^*(4^d)$ time as follows:
First we enumerate all maximal bicliques.
At least one side $S'$ of each maximal biclique is of size at most $d + 1$, since a $d$-degenerate graph does not contain a biclique $K_{d+1,d+1}$ (note that every vertex has degree $d + 1$ in $K_{d+1,d+1}$).
Hence, we may assume that~$i \le d$.
For each subset $S \subseteq S'$ of size $i$ (there are $\binom{|S'|}{i} \le 2^{d+1}$ many), we add $S$ to $\mathcal{S}_G$ if $S$ is not in $\mathcal{S}_G$ and $|\bigcap_{s \in S} N(s)| \ge j$.
With $\mathcal{S}_G$ at hand, we can apply \Cref{rr:part_of_a_biclique} exhaustively with polynomial-time overhead.

After applying \Cref{rr:part_of_a_biclique} exhaustively, at least one endpoint of each edge appears in the smaller side of some biclique, resulting in the following lemma.

\begin{lemma}
	\label{lemma:s-is-vc}
	If \Cref{rr:part_of_a_biclique} has been applied exhaustively, then $\bigcup_{S \in \mathcal{S}_G} S$ is a vertex cover of~$G$.
\end{lemma}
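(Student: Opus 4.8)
The plan is to verify the defining property of a vertex cover directly: I would show that every edge of $G$ has at least one endpoint in $\bigcup_{S \in \mathcal{S}_G} S$. To this end I would fix an arbitrary edge $uv \in E(G)$ and exploit that \Cref{rr:part_of_a_biclique} has been applied exhaustively. Since $uv$ survives the reduction, it cannot be an edge lying outside every biclique; hence there exists a biclique $(S, T)$ of $G$ of which $uv$ is an edge.

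The next step is to locate the endpoints $u$ and $v$ relative to the two sides $S$ and $T$. Here I would invoke that a biclique realizes a $K_{i,j}$ whose edge set consists exactly of the cross pairs $st$ with $s \in S$ and $t \in T$, and that $S$ and $T$ are disjoint. Consequently, any edge belonging to the biclique must join the two sides, so one of $u, v$ lies in the smaller side $S$ while the other lies in the larger side $T$. In particular, at least one endpoint of $uv$ belongs to $S$.

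To conclude, I would note that $S \in \mathcal{S}_G$ by the definition of $\mathcal{S}_G$ as the collection of smaller sides of all bicliques, so the endpoint in $S$ lies in $\bigcup_{S' \in \mathcal{S}_G} S'$. As $uv$ was arbitrary, this shows that $\bigcup_{S \in \mathcal{S}_G} S$ meets every edge and is therefore a vertex cover of $G$.

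I do not anticipate a genuine obstacle in this argument; the only subtlety worth spelling out is the claim that an edge that is \emph{part of a biclique} necessarily crosses between $S$ and $T$ rather than lying within a single side. This is immediate from the complete-bipartite structure of $K_{i,j}$, but it is the one place where one must appeal to the precise definition of a biclique rather than a vaguer intuition.
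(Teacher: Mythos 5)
Your proof is correct and matches the paper's reasoning: the paper states the lemma as an immediate consequence of the observation that, after exhaustive application of \cref{rr:part_of_a_biclique}, every remaining edge is part of some biclique and hence has one endpoint in the smaller side of that biclique. Your write-up simply spells out this same argument in full detail, including the (correct) subtlety that edges of a biclique $K_{i,j}$ are exactly the cross pairs between $S$ and $T$.
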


Next, we show that \BFVD{} is FPT when parameterized by the vertex cover number $\vc(G)$.

\begin{proposition}
	\label{prop:fpt-for-vc}
	\BFVD{} can be solved in $\bigO^*(2^{\bigO(\vc \cdot k)})$ time.
\end{proposition}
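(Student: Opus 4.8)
The plan is to use a vertex cover $C$ of $G$ with $|C| = \vc$ to pin down the global structure of every biclique, and then to branch over a small set of solution ``profiles''. First I would compute a minimum vertex cover $C$ in $2^{\bigO(\vc)} \cdot n^{\bigO(1)}$ time by standard bounded-search branching and put $I = V(G) \setminus C$, so that $I$ is independent and every vertex of $I$ has all its neighbors in $C$. The central structural observation is that in any biclique $(S, T)$ at least one of the two sides lies entirely inside $C$: if there were $s \in S \cap I$ and $t \in T \cap I$, then $st \in E(G)$ would be an edge inside the independent set $I$, a contradiction. This observation immediately yields a detection subroutine: a graph with vertex cover $C$ contains a $K_{i,j}$ if and only if there is a set $A \subseteq C$ with either $|A| = i$ and $|\bigcap_{a \in A} N(a) \setminus A| \ge j$, or $|A| = j$ and $|\bigcap_{a \in A} N(a) \setminus A| \ge i$. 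Enumerating all $2^{\vc}$ subsets of $C$ therefore decides biclique-freeness in $2^{\vc} \cdot n^{\bigO(1)}$ time, which is essential since $K_{i,j}$-detection is coNP-hard in general.

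Next I would set up the branching over the solution. Write the sought set as $V' = V'_C \cup V'_I$ with $V'_C = V' \cap C$ and $V'_I = V' \cap I$ (disjoint). I would first guess $V'_C$ among the at most $2^{\vc}$ subsets of $C$, delete it, and set $C^* = C \setminus V'_C$, which is still a vertex cover of the remaining graph. It then remains to pick at most $k - |V'_C|$ vertices of $I$ whose deletion destroys all surviving bicliques. Here I exploit that two vertices of $I$ with the same neighborhood in $C^*$ are false twins (equal neighborhoods, nonadjacent) and hence interchangeable in every biclique; so the only thing that matters about $V'_I$ is how many vertices it removes from each of the at most $2^{\vc}$ neighborhood classes. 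I would therefore guess, for each of the at most $k$ vertices of $V'_I$, which class it belongs to --- at most $(2^{\vc})^k = 2^{\vc k}$ possibilities --- and for each guess delete the prescribed number of arbitrary vertices from each class.

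For every combined guess I then run the detection subroutine and accept if some resulting graph is $K_{i,j}$-free. Correctness follows in both directions: given a genuine solution $V'$, the branch that guesses $V'_C$ correctly together with the correct per-class counts of $V'_I$ deletes, by the twin property, a graph isomorphic to $G - V'$ and hence biclique-free; conversely, every accepting branch exhibits a deletion set of size at most $k$. The running time is $2^{\bigO(\vc)}$ to compute $C$, times $2^{\vc}$ guesses for $V'_C$, times $2^{\vc k}$ guesses for the classes of $V'_I$, times the $2^{\vc} \cdot n^{\bigO(1)}$ detection, i.e.\ $2^{\bigO(\vc k)} \cdot n^{\bigO(1)}$ overall.

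I expect the main obstacle to be establishing the two points that make this brute force affordable: (i) the twin-interchangeability argument, which lets us guess only per-class deletion counts instead of concrete vertices and thus keeps the branching at $2^{\bigO(\vc k)}$; and (ii) the biclique-detection subroutine, whose bound crucially uses that a vertex cover of size $\vc$ forces at least one full side of every $K_{i,j}$ into $C$, turning an otherwise coNP-hard test into a $2^{\vc} \cdot n^{\bigO(1)}$ enumeration.
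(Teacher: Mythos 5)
Your proposal is correct and takes essentially the same approach as the paper: guess the intersection of the solution with the vertex cover, guess the neighborhood classes (with multiplicities) of the solution vertices outside the cover, exploit twin-interchangeability to delete arbitrary class representatives, and then test biclique-freeness in $2^{\bigO(\vc)} \cdot n^{\bigO(1)}$ time, for $2^{\bigO(\vc \cdot k)} \cdot n^{\bigO(1)}$ overall. The only cosmetic differences are that you detect bicliques via a self-contained enumeration of subsets of the cover (using that one side of any $K_{i,j}$ must lie inside the cover) where the paper invokes Eppstein's $\bigO^*(2^d)$-time maximal-biclique enumeration for $d$-degenerate graphs, and that your explicit per-class deletion counts spell out the multiplicity handling that the paper phrases tersely.
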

\begin{proof}
	For an instance $\mathcal{I} = (G, k, i, j)$ of \BFVD{}, let $X$ be a vertex cover of $G$.
	Note that~$G-X$ is biclique-free, thus if~$\vc \le k$, we return $X$ as a solution and are done.
	So suppose that~$\vc > k$ and let $V'$ be a hypothetical solution of $\mathcal I$.
	First, our algorithm guesses the subset $X' = V' \cap X$.
	For the remaining vertices $v \in V' \setminus X$, we know that $N(v) \subseteq X$, and we guess the neighborhood of each of the at most $k$ vertices in $V' \setminus X$.
	Let $\mathcal N$ be the (multi-)set of the guessed neighborhoods.
	Note that there are at most $2^{|X|}$ choices for the first guess
	and at most $2^{|X|}$ choices for each of the neighborhoods, resulting in at most $2^{\bigO(\vc \cdot k)}$ choices.
	We arbitrarily choose a distinct vertex $v \in V(G) \setminus X$ with $N(v) = Y$ for each $Y \in \mathcal{N}$ and we delete it from $G$.
	We also delete $X'$ from the graph.
	If the resulting graph has no biclique $K_{i,j}$, which can be determined in $\bigO^*(2^d) = \bigO^*(2^{\vc})$ time, then we conclude that $\mathcal{I}$ is a \yes-instance.
\end{proof}

\Cref{lemma:s-is-vc,prop:fpt-for-vc} imply that \BFVD{} is fixed-parameter tractable with respect to $\sside(G)$ on $d$-degenerate graphs.
This fact will play an important role in the algorithm presented in \Cref{sec:d,sec:fvn}.

Finally, we show that \BFVD{} is FPT when $k, i, j, d$ are part of the parameter.
Our algorithm essentially solves an instance of \textsc{Hitting Set} in which every set has size at most $i + j$.

\begin{proposition}
	\label{obs:hitting-set-case}
	\BFVD{} can be solved in~$\bigO^*(4^d \cdot (i + j)^{k})$ time.
\end{proposition}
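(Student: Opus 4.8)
The plan is to cast \BFVD{} as a bounded-size \textsc{Hitting Set} instance and solve it by a bounded search tree. A set $V'$ with $|V'| \le k$ is a solution exactly when it intersects the vertex set $S \cup T$ of every biclique $(S,T)$, and each such set has precisely $|S| + |T| = i + j$ elements. So I want to ``hit'' the family of all bicliques using at most $k$ vertices, where every set in the family has size $i + j$; a standard branching on the $i + j$ elements of an unhit set then yields a search tree of branching factor $i + j$ and depth at most $k$.

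The obstacle is that $G$ may contain $\Omega(2^n)$ bicliques, so the family to be hit cannot be listed up front. I would therefore branch \emph{lazily}, never materializing the whole family: maintain a partial deletion set $V'$ (initially empty) and, at each node, test whether $G - V'$ still contains a biclique and, if so, extract a single witness. This is exactly what the enumeration discussed before \Cref{lemma:s-is-vc} provides: compute $\mathcal{S}_{G - V'}$ in $\bigO^*(4^d)$ time, observing that $G - V'$ is biclique-free if and only if $\mathcal{S}_{G - V'} = \emptyset$. Here I use that degeneracy is closed under vertex deletion, so the degeneracy of $G - V'$ never exceeds $d$ and the $\bigO^*(4^d)$ bound holds uniformly at every node of the tree.

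The branching step is then immediate. If $\mathcal{S}_{G - V'} = \emptyset$, the current $V'$ is a solution and the branch accepts. Otherwise, if $|V'| = k$ the branch fails, and if $|V'| < k$ I pick any $S \in \mathcal{S}_{G - V'}$, use the common-neighborhood guarantee to form a concrete biclique $(S, \cdot)$ with larger side $T$ of size $j$, and branch into $i + j$ children, each adding one distinct vertex of $S \cup T$ to $V'$. Correctness is the usual \textsc{Hitting Set} argument: every solution must delete at least one vertex of the witness $(S, T)$, so exhausting the $i + j$ choices loses no solution, while any accepting leaf certifies that $G - V'$ is genuinely biclique-free.

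For the running time, the search tree has depth at most $k$ and branching factor $i + j$, hence at most $(i + j)^k$ leaves, and the work per node is dominated by the $\bigO^*(4^d)$ computation of $\mathcal{S}_{G - V'}$; multiplying gives the claimed $\bigO^*(4^d \cdot (i + j)^k)$ bound. The one point that needs care is precisely the laziness: one must argue that a \emph{single} biclique witness per node suffices for a complete branching rule, so that the exponential family of bicliques is never enumerated and the degeneracy-based extraction keeps every node within the budgeted time.
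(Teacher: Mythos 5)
Your proposal is correct and matches the paper's proof essentially verbatim: the paper also branches recursively on the $i+j$ vertices of a single biclique witness found in $\bigO^*(4^d)$ time per node, yielding a depth-$k$ search tree and the claimed bound. Your added remarks (degeneracy is hereditary under vertex deletion, and the lazy Hitting-Set framing) only make explicit what the paper leaves implicit.
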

\begin{proof}
	We solve an instance $\mathcal{I} = (G, k, i, j)$ of \BFVD{} recursively as follows:
	If there is a biclique $(S, T)$ (which can be found in $\bigO^*(4^d)$ time), then $\mathcal{I}$ is a \yes-instance if and only if $(G - v, k - 1, i, j)$ is a \yes-instance for some $v \in S \cup T$.
	The search tree has depth at most $k$ and each node has at most $i + j$ children, and thus the running time is $\bigO^*(4^d \cdot (i + j)^{k})$.
\end{proof}

We remark that \BFVD{} is unlikely to be FPT for $i + j + k$, since it is \coWone-hard for $i + j$ when $k = 0$, as mentioned in the introduction.

\section{FPT with degeneracy and solution size}
\label{sec:d}

In this section, we show that \BFVD{} can be solved in $\bigO^*(2^{\bigO(d k^2)})$ time on graphs with degeneracy $d$, extending the known fixed-parameter tractability of \BDD{} \citep{DBLP:conf/cocoa/RamanSS08}.
Essentially, our algorithm considers two cases based on the value of $\sside(G)$.
If $\sside(G)$ is sufficiently small, then we invoke the algorithm of \Cref{prop:fpt-for-vc}.
Otherwise, we aim to find a few vertices that intersect a solution.
To find such vertices, we use the following lemma of \citet{DBLP:journals/algorithmica/AlonG09}, which has been also applied to show fixed-parameter tractability of several domination problems (including \BDD{}) \citep{DBLP:journals/algorithmica/AlonG09,DBLP:conf/wg/GolovachV08,DBLP:conf/cocoa/RamanSS08}.

\begin{lemma}[\citep{DBLP:journals/algorithmica/AlonG09}]
	\label{lemma:not-many-hit-many}
	Let $X$ be a set of at least $(4d + 2)k$ vertices.
	Then there are at most $(4d + 2)k$ vertices that are adjacent to at least $|X| / k$ vertices of $X$.
\end{lemma}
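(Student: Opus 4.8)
The plan is a double-counting argument that exploits the sparsity of $d$-degenerate graphs: every subgraph on $m$ vertices has at most $dm$ edges, since along a degeneracy ordering each vertex has at most $d$ neighbors appearing later, and each edge is charged to exactly one endpoint. I would set up the count between $X$ and the target set and then solve a single inequality.

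First I would let $Y$ denote the set of all vertices of $G$ adjacent to at least $|X|/k$ vertices of $X$; the goal is $|Y| \le (4d+2)k$. Consider the induced subgraph $H = G[X \cup Y]$, which is again $d$-degenerate, so $|E(H)| \le d\,|V(H)| \le d(|X| + |Y|)$. Now I would count the edges of $H$ incident to $X$ from both directions. On one hand, summing over $v \in Y$ the quantity $|N(v) \cap X|$ is at least $|Y| \cdot |X|/k$ by the defining property of $Y$. On the other hand, each edge of $H$ contributes at most twice to this sum (twice only when both its endpoints lie in $X \cap Y$), so the sum is at most $2|E(H)| \le 2d(|X| + |Y|)$. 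Combining the two estimates yields $|Y| \cdot |X|/k \le 2d(|X| + |Y|)$.

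Finally I would solve this for $|Y|$. Writing $x = |X|$ and $y = |Y|$ and using the hypothesis $x \ge (4d+2)k$, which guarantees $x/k - 2d \ge 2d+2 > 0$, the inequality rearranges to $y \le 2dxk/(x - 2dk)$. Since the right-hand side is decreasing in $x$ (its derivative in $x$ is negative), it is maximized at $x = (4d+2)k$, where it evaluates to $d(4d+2)k/(d+1) \le (4d+2)k$; hence $|Y| \le (4d+2)k$ for every admissible $x$.

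The one point I would be most careful about is the factor of $2$ coming from edges inside $X \cap Y$, since the set $Y$ of the statement is allowed to meet $X$. The constant $(4d+2)$ in the hypothesis is precisely what leaves enough slack to absorb this doubling; if instead one only needed to bound $|Y \setminus X|$, the bipartite count between $X$ and $Y \setminus X$ would avoid the factor of $2$ entirely. Beyond that, the argument is just the rearrangement of one inequality and a monotonicity check, so I do not expect any genuine obstacle.
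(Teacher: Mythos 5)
Your proof is correct. The paper itself gives no proof of this lemma --- it is imported verbatim from Alon and Gutner \cite{DBLP:journals/algorithmica/AlonG09} --- and your double-counting argument (bounding $\sum_{v \in Y}|N(v)\cap X|$ below by $|Y|\cdot|X|/k$ and above by $2|E(G[X\cup Y])| \le 2d(|X|+|Y|)$, then using monotonicity of $2dk x/(x-2dk)$ in $x \ge (4d+2)k$) is essentially the same sparsity-based counting used in that reference, including the careful handling of vertices in $X \cap Y$ that forces the factor of $2$.
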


Using \Cref{lemma:not-many-hit-many}, we will show that a set of $\bigO(dk)$ vertices that intersects a hypothetical solution can be found in polynomial time whenever $\mathcal{S}_G$ is sufficiently large (\Cref{lemma:branches}).
Recall that $\mathcal S_G$ is the collection of smaller sides of all bicliques $K_{i,j}$.
The proof of \Cref{lemma:branches} relies on the following lemma.

\begin{lemma}
	\label{lemma:must-hit-many}
	Let $\mathcal{I} = (G, k, i, j)$ be a \yes-instance of \BFVD{}.
	Let $\mathcal{X} \subseteq \mathcal{S}_G$ be a nonempty collection of smaller sides of bicliques and let $X = \bigcup_{X' \in \mathcal{X}} X'$.
	Suppose that $V'$ is a solution of $\mathcal{I}$ with $V' \cap X = \emptyset$.
	Then there exists a vertex $v \in V'$ which has at least $|X| / k$ neighbors in $X$.
\end{lemma}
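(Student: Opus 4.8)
The plan is to combine a one-line double-counting argument with pigeonhole, where the only substantive step is an auxiliary claim about the structure of the solution. The claim I would isolate first is this: \emph{every} vertex of $X$ has at least one neighbor in $V'$. Granting the claim, I would count the edges between $X$ and $V'$ in two ways. On the one hand, $\sum_{x \in X} |N(x) \cap V'| \ge |X|$, since each summand is at least one by the claim. On the other hand, this quantity equals $\sum_{v \in V'} |N(v) \cap X|$. Since a solution of a \yes-instance whose graph contains a biclique cannot be empty, we have $1 \le |V'| \le k$, so averaging over the at most $k$ vertices of $V'$ yields some $v \in V'$ with $|N(v) \cap X| \ge |X|/|V'| \ge |X|/k$, which is exactly the conclusion.

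Thus the crux is the per-vertex claim. To prove it, I would fix an arbitrary $x \in X$ and, using $X = \bigcup_{X' \in \mathcal{X}} X'$, pick some $X' \in \mathcal{X}$ with $x \in X'$. Since $X' \in \mathcal{S}_G$, its common neighborhood $\bigcap_{s \in X'} N(s)$ has size at least $j$, so there is a biclique $(X', \cdot)$ in $G$. Because $V'$ is a solution, $G - V'$ contains no $K_{i,j}$, so this biclique must be destroyed by $V'$. The hypothesis $V' \cap X = \emptyset$ together with $X' \subseteq X$ gives $V' \cap X' = \emptyset$, so the smaller side $X'$ survives intact in $G - V'$; consequently the common neighborhood of $X'$ must shrink below $j$ in $G - V'$, i.e.\ $\bigl|\bigl(\bigcap_{s \in X'} N(s)\bigr) \setminus V'\bigr| < j$. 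Since $\bigl|\bigcap_{s \in X'} N(s)\bigr| \ge j$, this forces $V'$ to contain a vertex $t \in \bigcap_{s \in X'} N(s)$. As $x \in X'$, we have $t \in N(x)$, so $t \in N(x) \cap V'$ is the desired neighbor of $x$ in $V'$.

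I do not anticipate a serious obstacle; once the per-vertex claim is phrased correctly, the lemma is essentially an averaging observation. The two points that need a little care are (i) checking that $V' \ne \emptyset$ so that the bound $|X|/|V'| \ge |X|/k$ is meaningful, and (ii) ensuring the hitting argument uses that the \emph{entire} smaller side $X'$ is untouched by $V'$ rather than merely the single vertex $x$ — this is precisely where the hypothesis $V' \cap X = \emptyset$ is indispensable, since it is the whole of $X'$ surviving that forces $V'$ to delete a common neighbor adjacent to $x$.
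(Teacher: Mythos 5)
Your proof is correct and is essentially the paper's argument in contrapositive form: the paper assumes for contradiction that every $v \in V'$ has fewer than $|X|/k$ neighbors in $X$, uses the same averaging count to produce a vertex $u \in X$ with no neighbor in $V'$, and then observes that the biclique whose smaller side contains $u$ and lies inside $X$ survives in $G - V'$ --- exactly the fact your per-vertex domination claim expresses. The only difference is logical packaging (direct claim plus edge double counting versus contradiction plus union bound), so this counts as the same approach.
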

\begin{proof}
	Assume to the contrary that every vertex $v$ in $V'$ has less than $|X| / k$ neighbors in $X$, i.e., $|X \cap N(v)| < |X| / k$.
	Then we have $X \setminus N(V') \ne \emptyset$ since
	\[
		|X \setminus N(V')| \ge |X| - \Big| \bigcup_{v \in V'} (X \cap N(v)) \Big| \ge |X| - \sum_{v \in V'} |X \cap N(v)| > 0.
	\]
	Choose any $u \in X \setminus N(V')$.
	By the definition of $X$, there exists a biclique $(S, T)$ such that $u \in S$ and $S \subseteq X$.
	Since $V' \cap X = \emptyset$, the solution $V'$ does not intersect $S$.
	It does not intersect $T$ either, since every vertex in $T$ is adjacent to $u$.
	Thus, there remains a biclique $K_{i,j}$ in $G - V'$, a contradiction.
\end{proof}

\begin{lemma}
	\label{lemma:branches}
	Let $\mathcal{I} = (G, k, i, j)$ be an \yes-instance of \BFVD{}.
	If $\sside(G) > (4d + 2)k$, then we can find in polynomial time a set $W$ of at most $(8d + 4)k + i$ vertices such that $V' \cap W \ne \emptyset$ for every solution $V'$ of~$\mathcal{I}$.
\end{lemma}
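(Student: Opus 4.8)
The plan is to build the hitting set $W$ from two pieces: a set $X$ assembled out of complete smaller sides of bicliques, and the set $D$ of vertices that are ``popular'' towards $X$ supplied by \Cref{lemma:not-many-hit-many}. Concretely, I would greedily accumulate smaller sides from $\mathcal{S}_G$: starting from $X = \emptyset$ and $\mathcal{X} = \emptyset$, repeatedly pick a side $S \in \mathcal{S}_G$ that contributes at least one new vertex, set $X \leftarrow X \cup S$ and $\mathcal{X} \leftarrow \mathcal{X} \cup \{S\}$, and stop as soon as $|X| \ge (4d+2)k$. Since $\sside(G) = |\bigcup_{S \in \mathcal{S}_G} S| > (4d+2)k$, there are always enough sides available to cross this threshold, so the process terminates; and because each smaller side has exactly $i$ vertices, adding one increases $|X|$ by at most $i$, so the step that first pushes $|X|$ past the threshold overshoots by less than $i$, giving $(4d+2)k \le |X| < (4d+2)k + i$. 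This is the one delicate point of the construction: I cannot simply take $X$ to be an arbitrary subset of $\bigcup_{S} S$ of size $(4d+2)k$, because \Cref{lemma:must-hit-many} requires $X$ to be a union of complete smaller sides; insisting on this is exactly what costs the additive $+i$ slack in the final size bound.

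With $X$ fixed and $|X| \ge (4d+2)k$, I would invoke \Cref{lemma:not-many-hit-many} to obtain the set $D$ of all vertices adjacent to at least $|X|/k$ vertices of $X$, so that $|D| \le (4d+2)k$. I then set $W \coloneqq X \cup D$. The size bound is then immediate: $|W| \le |X| + |D| < ((4d+2)k + i) + (4d+2)k = (8d+4)k + i$.

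It remains to verify that $W$ intersects every solution $V'$, which I would do by a clean case distinction. If $V' \cap X \neq \emptyset$, then trivially $V' \cap W \neq \emptyset$. Otherwise $V' \cap X = \emptyset$, and since $\mathcal{X} \subseteq \mathcal{S}_G$ is nonempty with $X = \bigcup_{X' \in \mathcal{X}} X'$, \Cref{lemma:must-hit-many} yields a vertex $v \in V'$ having at least $|X|/k$ neighbors in $X$. By the defining property of $D$, this $v$ lies in $D \subseteq W$, so again $V' \cap W \neq \emptyset$. The construction runs in polynomial time given $\mathcal{S}_G$, since the greedy loop adds at most $(4d+2)k + i$ sides and everything else is elementary.

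I expect the only genuine obstacle to be the bookkeeping around the threshold: ensuring simultaneously that $|X|$ is large enough to invoke \Cref{lemma:not-many-hit-many}, small enough (up to the $+i$ term) to meet the claimed bound, and still a union of full smaller sides so that \Cref{lemma:must-hit-many} applies in the second case. Once $X$ is chosen with these three properties in balance, the rest is a direct combination of the two preceding lemmas.
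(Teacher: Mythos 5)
Your proposal is correct and follows essentially the same route as the paper's proof: greedily assemble $X$ as a union of smaller sides with $(4d+2)k \le |X| \le (4d+2)k + i$ (the paper phrases this as taking an inclusion-wise minimal subcollection $\mathcal{X} \subseteq \mathcal{S}_G$), add the at most $(4d+2)k$ vertices with at least $|X|/k$ neighbors in $X$ via \Cref{lemma:not-many-hit-many}, and conclude with the same case distinction using \Cref{lemma:must-hit-many}. Your explicit accounting of the $+i$ overshoot and the nonemptiness of $\mathcal{X}$ matches the paper's minimality argument, so there is nothing to correct.
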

\begin{proof}
	We first find an inclusion-wise minimal subcollection $\mathcal{X} \subseteq \mathcal{S}_G$ of smaller sides of bicliques such that $|\bigcup_{X' \in \mathcal{X}} X'| \ge (4d + 2)k$.
	This can be done in polynomial time using a simple greedy algorithm.
	Let~$X = \bigcup_{X' \in \mathcal X} X'$.
	Note that~$|X| \le (4d + 2)k + i$ since otherwise the deletion of arbitrary $X' \in \mathcal X$ from $\mathcal X$ gives us another desired subcollection, contradicting the minimality of $\mathcal X$.
	We will include $X$ into~$W$.
	Thus, $V' \cap W \ne \emptyset$ holds if $V' \cap X \ne \emptyset$. 
	If $V' \cap X = \emptyset$, then by \Cref{lemma:must-hit-many}, there exists a vertex $v \in V'$ which has at least $|X| / k$ neighbors in $X$.
	Let $U$ be the set of vertices $u$ with at least $|X| / k$ neighbors in $X$.
	By \Cref{lemma:not-many-hit-many}, $|U| \le (4d + 2)k$.
	Thus, the lemma holds for $W = X \cup U$.
\end{proof}

Finally, we show that \BFVD{} is FPT with respect to $k + d$ using \Cref{lemma:branches}.

\begin{theorem}
	\label{thm:fpt-for-dk}
	\BFVD{} can be solved in time $\bigO^*(2^{\bigO(dk^2)})$.
\end{theorem}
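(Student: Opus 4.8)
The plan is to assemble a branching algorithm that switches between two regimes according to the value of $\sside(G)$: when $\sside(G)$ is small we finish off the instance with the vertex-cover algorithm of \Cref{prop:fpt-for-vc}, and when it is large we branch on a bounded set of vertices supplied by \Cref{lemma:branches}. Since both lemmas are already available, the theorem is essentially a matter of wiring them together and accounting for the running time correctly.

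Concretely, given an instance $\mathcal{I} = (G, k, i, j)$, I would first apply \Cref{rr:part_of_a_biclique} exhaustively, which costs $\bigO^*(4^d)$ as explained after the rule, and report a \yes-instance if no $K_{i,j}$ remains, or a \no-instance if a biclique remains but $k = 0$. I may assume $i \le d$, since otherwise $G$ would have to contain $K_{d+1,d+1}$ to contain any $K_{i,j}$, which is impossible in a $d$-degenerate graph, making $\mathcal{I}$ trivially a \yes-instance. After the rule is applied, \Cref{lemma:s-is-vc} tells us that $X \coloneqq \bigcup_{S \in \mathcal{S}_G} S$ is a vertex cover of $G$ of size exactly $\sside(G)$. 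If $\sside(G) \le (4d + 2)k$, then $X$ is a vertex cover of size $\bigO(dk)$, so \Cref{prop:fpt-for-vc} solves $\mathcal{I}$ in $\bigO^*(2^{\bigO(\vc \cdot k)}) = \bigO^*(2^{\bigO(dk^2)})$ time and I stop and return its answer. Otherwise $\sside(G) > (4d + 2)k$, and \Cref{lemma:branches} produces in polynomial time a set $W$ with $|W| \le (8d + 4)k + i = \bigO(dk)$ (using $i \le d$) such that every solution meets $W$; I then branch by recursing on $(G - w,\, k - 1,\, i,\, j)$ for each $w \in W$ and returning \yes{} iff some branch does.

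Correctness is immediate: \Cref{rr:part_of_a_biclique} is a safe reduction, the small-side case is handled exactly by \Cref{prop:fpt-for-vc}, and in the large-side case the defining property of $W$ guarantees that any size-$k$ solution contains some $w \in W$, so the standard branching recurrence is valid. For the running time, each branching node reduces $k$ by one and spawns $|W| = \bigO(dk)$ children, so the search tree has depth at most $k$ and at most $(\bigO(dk))^k = 2^{\bigO(k \log(dk))}$ nodes; every node spends $\bigO^*(4^d)$ on preprocessing, and the nodes that trigger the small-side case are leaves where we additionally invoke \Cref{prop:fpt-for-vc} at cost $\bigO^*(2^{\bigO(dk^2)})$. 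Multiplying the node count by the per-node work yields $2^{\bigO(k\log(dk))} \cdot \bigO^*(2^{\bigO(dk^2)})$.

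The only genuinely delicate point I expect is the final bookkeeping, namely confirming that the branching overhead $2^{\bigO(k \log(dk))}$ is absorbed into $2^{\bigO(dk^2)}$ and that the vertex-cover subroutine's cost does not compound across the tree. The first is routine since $k\log(dk) = k(\log d + \log k) = \bigO(dk^2)$, and the second holds because \Cref{prop:fpt-for-vc} is invoked only at the leaves entering the small-side case, each of which terminates that branch. Hence the overall bound is $\bigO^*(2^{\bigO(dk^2)})$, as claimed.
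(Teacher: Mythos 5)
Your proposal is correct and follows essentially the same approach as the paper's proof: apply \cref{rr:part_of_a_biclique}, dismiss $i > d$ as a trivial \yes-instance, solve via \cref{prop:fpt-for-vc} when $\sside(G) \le (4d+2)k$, and otherwise branch on the set $W$ from \cref{lemma:branches}. Your additional bookkeeping (the $2^{\bigO(k\log(dk))}$ tree size being absorbed into $2^{\bigO(dk^2)}$, and the vertex-cover subroutine firing only at leaves) is a correct, more explicit version of the running-time claim the paper states without detail.
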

\begin{proof}
	Given an instance $(G, k, i,j)$ of \BFVD{}, we first apply \Cref{rr:part_of_a_biclique} exhaustively.
	If $i > d$, then we obtain a trivial \yes-instance since $G$ does not have $K_{d+1,d+1}$.
	We consider two cases.
	Suppose first that $\sside(G) \le (4d + 2)k$.
	Since $\bigcup_{S \in \mathcal{S}_G} S$ is a vertex cover of $G$ by \Cref{lemma:s-is-vc}, we can solve the instance using the algorithm of \Cref{prop:fpt-for-vc} in $\bigO^*(2^{\bigO(dk^2)})$ time.
	Otherwise, we have $\sside(G) > (4d + 2)k$.
	In this case, we can find in polynomial time a set $W$ of size at most $(8d + 4)k + i \in \bigO(dk)$ vertices that intersects every solution by \Cref{lemma:branches}.
	For every vertex $w \in W$, we recursively solve the instance $(G - w, k - 1, i, j)$ until we have a trivial instance or $\sside(G) \le (4d + 2)k$.
	The running time is bounded by $\bigO^*(2^{\bigO(dk^2)})$.
\end{proof}

\section{FPT with feedback vertex number}
\label{sec:fvn}

In the previous section, we have seen an $\bigO^*(2^{\bigO(dk^2)})$-time algorithm for \BFVD{}.
We present an algorithm for \BFVD{} running in $\bigO^*(2^{\bigO(k^2 + \fvn \cdot k)})$ time in this section:

\begin{theorem}
	\label{thm:fpt-fvn}
	For $i \ge 2$, \BFVD{} can be solved in $\bigO^*(2^{\bigO(k^2 + \fvn \cdot k)})$ time.
\end{theorem}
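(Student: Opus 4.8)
The plan is to follow the win–win strategy of \Cref{thm:fpt-for-dk}, but to replace the degeneracy-based branching by one that exploits the forest $F := G - D$, where $D$ is a feedback vertex set of size $f := \fvn$. I would first compute such a $D$ (an exact feedback vertex set is computable in $\bigO^*(c^f)$ time, well within the target bound) and record the consequences of $i \ge 2$: since $j \ge i \ge 2$, every biclique $K_{i,j}$ contains a $C_4$, so (as $F$ is acyclic) every biclique meets $D$; moreover $K_{i,j}$ has feedback vertex number $i-1$, whence $i \le f+1$. In particular deleting $D$ destroys all bicliques, so if $k \ge f$ the instance is a \yes-instance and I may assume $k < f$. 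After applying \Cref{rr:part_of_a_biclique} exhaustively, \Cref{lemma:s-is-vc} gives that $\bigcup_{S \in \mathcal{S}_G} S$ is a vertex cover; I split it as $(\bigcup_S S) \cap D$ (of size $\le f$) and $(\bigcup_S S) \cap F$, writing $\sside_F$ for the size of the latter, so the cover has size at most $f + \sside_F$. The target $\bigO^*(2^{\bigO(k^2+fk)})$ then follows once I guarantee $\sside_F = \bigO(k)$ in the base case: the cover has size $\bigO(f+k)$ and \Cref{prop:fpt-for-vc} solves the instance in $\bigO^*(2^{\bigO((f+k)k)})$ time.

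The heart of the proof is a branching lemma that applies already when $\sside_F \ge 4k$, a threshold \emph{independent of $f$}, which is exactly what upgrades the $dk^2$ of \Cref{thm:fpt-for-dk} to $k^2+fk$. I would build the branching set as $W = D \cup X_F \cup U_F$, always putting $D$ into $W$; this reduces the analysis to solutions with $V' \cap D = \emptyset$, i.e. $V' \subseteq F$. For such forest solutions I choose an inclusion-minimal $\mathcal{X} \subseteq \mathcal{S}_G$ with $X_F := \bigcup_{S \in \mathcal{X}}(S \cap F)$ of size $\ge 4k$; minimality and $i \le f+1$ give $|X_F| \le 4k + i = \bigO(f+k)$. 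Taking the union of forest-parts of \emph{whole} smaller sides yields a forest-closed analogue of \Cref{lemma:must-hit-many}: if $V' \subseteq F$ avoids $X_F$, then every $u \in X_F$ is dominated by $V'$, since otherwise its smaller side $S \ni u$ satisfies $V' \cap (S \cap D) = \emptyset$ (as $V' \subseteq F$), $V' \cap (S \cap F) \subseteq V' \cap X_F = \emptyset$ (forest-closedness), and $V' \cap T = \emptyset$ (as $T \subseteq N(u)$ and $u$ is undominated), so the biclique $(S,T)$ would survive. Hence $X_F \subseteq N(V')$, and as $|V'| \le k$ some $v \in V' \subseteq F$ has $|N(v) \cap X_F| \ge |X_F|/k \ge 4$. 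Acyclicity then does the counting: the subforest $F[U_F \cup X_F]$ has fewer than $|U_F|+|X_F|$ edges, so $U_F := \{\, v \in F : |N(v)\cap X_F| \ge |X_F|/k \,\}$ obeys $|U_F|\cdot(|X_F|/k) \le 2(|U_F|+|X_F|)$, giving $|U_F| \le 4k$. Thus $W$ has size $\bigO(f+k)$, intersects every solution, and is found in polynomial time.

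Assembling the algorithm, I branch on $W$ (recursing on $(G-w,\,k-1)$ for each $w \in W$) while $\sside_F \ge 4k$, and otherwise invoke \Cref{prop:fpt-for-vc} as above; the search tree has depth $\le k$ and branching factor $\bigO(f+k)$, contributing $(f+k)^{k} = 2^{\bigO(k\log(f+k))} \subseteq 2^{\bigO(k^2+fk)}$, dominated by the $2^{\bigO((f+k)k)}$ cost at the leaves. The step I expect to be most delicate is the forest-closed domination argument: $X_F$ must be the union of forest-parts of entire smaller sides (not merely a set of heavy forest vertices), so that an undominated $u$ genuinely leaves a surviving biclique, and every solution touching $D$ must be routed through the cheap inclusion $D \subseteq W$ so that the acyclicity bound on $U_F$ only ever controls \emph{forest} vertices. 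Getting this bookkeeping right—so that the branching threshold depends on $k$ alone and not on $f$—is precisely the refinement over \Cref{thm:fpt-for-dk}.
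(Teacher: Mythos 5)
Your proof is correct, but it takes a genuinely different route from the paper's. The paper never branches: \Cref{alg:fpt-fvn} first dispatches the case $j \le \fvn + 1$ to the hitting-set algorithm of \Cref{obs:hitting-set-case}, and for $j > \fvn + 1$ it relies on \Cref{obs:q-center} (every smaller side contains at most one forest vertex) to \emph{guess} the intersections of a solution with $D$ and with the set $R$ of forest vertices having at least three smaller-side vertices in their closed forest neighborhood, rejecting outright via two no-certificates --- a packing of vertex-disjoint bicliques when $\abs{R} > 3k$ (\Cref{lem:large-r}) and a domination argument when more than $2k$ forest vertices lie in smaller sides (\Cref{lem:large-q}) --- before calling \Cref{prop:fpt-for-vc}. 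You instead transplant the depth-$k$ branching scheme of \Cref{thm:fpt-for-dk}, and your two substitutions are sound: taking $X_F$ to be the union of forest parts of \emph{whole} smaller sides makes the survival argument of \Cref{lemma:must-hit-many} go through for solutions contained in $V(F)$ (such solutions automatically miss $S \cap D$), and replacing the Alon--Gutner bound (\Cref{lemma:not-many-hit-many}) by edge-counting in the forest $F[U_F \cup X_F]$ indeed yields $\abs{U_F} \le 4k$ once $\abs{X_F} \ge 4k$, a threshold independent of $\fvn$ --- exactly what improves the exponent $dk^2$ to $k^2 + \fvn \cdot k$. Your route buys uniformity: no case split on $j$, no analogue of \Cref{obs:q-center}, and --- since $i \le \fvn + 1$ holds trivially for $i = 1$ --- your argument even establishes the same bound for $i = 1$, with $i \ge 2$ used only for the $k \ge \fvn$ shortcut (and hence for the corollary). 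The paper's route buys a correctness proof with no search tree, hence no bookkeeping about re-applying \Cref{rr:part_of_a_biclique} and shrinking $D$ down the recursion. One caveat you share with the paper: the $\bigO^*(4^{\fvn})$ cost of enumerating $\mathcal{S}_G$ at each node fits inside $2^{\bigO(k^2 + \fvn \cdot k)}$ only when $k \ge 1$.
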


For $i \ge 2$, any instance $(G, k, i,j)$ with $k \ge \fvn$ is a \yes-instance, since a forest does not contain any biclique $K_{i,j}$.
Thus, we have the following corollary:

\begin{corollary}
	For $i \ge 2$, \BFVD{} can be solved in $\bigO^*(2^{\bigO(\fvn^2)})$ time.
\end{corollary}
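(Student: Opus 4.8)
The plan is to combine the observation stated immediately before the corollary with \cref{thm:fpt-fvn} via a simple case distinction on whether $k$ reaches the feedback vertex number. First I would dispose of the easy regime $k \ge \fvn$. Here a minimum feedback vertex set $D$ satisfies $|D| = \fvn \le k$, and $G - D$ is a forest. Since $i \ge 2$ forces $j \ge i \ge 2$, every $K_{i,j}$ contains a $K_{2,2}$, that is, a cycle of length four; as a forest is acyclic it contains no such subgraph, so $G - D$ is already $K_{i,j}$-free. Hence $D$ is a solution of size at most $k$, the instance is a \yes-instance, and we may report this immediately.

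In the complementary regime $k < \fvn$ I would simply invoke \cref{thm:fpt-fvn}, whose running time is $\bigO^*(2^{\bigO(k^2 + \fvn \cdot k)})$, and bound its exponent using $k < \fvn$: then $k^2 + \fvn \cdot k < \fvn^2 + \fvn^2 = 2\fvn^2$, so the exponent is $\bigO(\fvn^2)$ and the whole computation runs in $\bigO^*(2^{\bigO(\fvn^2)})$ time, as claimed.

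To turn this into an algorithm that does not know $\fvn$ in advance, I would first test whether $\fvn \le k$ using a standard fixed-parameter algorithm for \textsc{Feedback Vertex Set} (running in $\bigO^*(c^k)$ time for a constant $c$, which is dominated by the target bound): if the test succeeds we output \yes{} by the argument above, and otherwise we have certified $k < \fvn$ and run the algorithm of \cref{thm:fpt-fvn}. I do not expect any real obstacle here, as all of the work is carried by \cref{thm:fpt-fvn} together with the (already stated) fact that forests are $K_{i,j}$-free for $i \ge 2$; the only points deserving a word of justification are exactly this use of the hypothesis $i \ge 2$ and the elementary inequality $k^2 + \fvn \cdot k = \bigO(\fvn^2)$ valid in the regime $k < \fvn$.
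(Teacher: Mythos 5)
Your proposal is correct and matches the paper's argument exactly: the paper derives the corollary from \cref{thm:fpt-fvn} together with the observation that any instance with $k \ge \fvn$ is a \yes-instance (since forests contain no $K_{i,j}$ when $i \ge 2$), so that in the remaining case $k < \fvn$ the exponent $k^2 + \fvn \cdot k$ is $\bigO(\fvn^2)$. Your additional remarks on computing or testing a feedback vertex set are fine but not an issue the paper needed to dwell on, as its algorithm already assumes a feedback vertex set is available and computing one fits within the stated time bound.
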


We remark that, as the degeneracy of a graph is at most~$\fvn+1$, this is faster than the running time of the algorithm derived analogously from \Cref{thm:fpt-for-dk}, which is $\bigO^*(2^{\bigO(\fvn^3)})$.

\cref{alg:fpt-fvn} provides an overview of the algorithm that shows \cref{thm:fpt-fvn}.
In a nutshell, we identify several cases that are efficiently solvable.
If none of the cases apply, then $\sside(G) \in \bigO(k + \fvn)$ holds, and we can use the algorithm of \Cref{prop:fpt-for-vc}.
Let $V'$ denote a hypothetical solution and let $D$ be a minimum feedback vertex set.
We first guess the intersection $D' = V' \cap D$, which we delete from the graph.
Let $R \subseteq V(G) \setminus D$ be the set of vertices whose closed neighborhood contains at least three vertices in $\bigcup_{S \in \mathcal{S}_G} S$.
As we show later in \Cref{lem:large-r}, we can immediately conclude that we have a \no-instance if $|R| > 3k$ (\Cref{line:fpt-fvn:r-no}).
Again, we guess the intersection $R' = V' \cap R$ to be deleted from the graph.
If more than $2k$ vertices in the forest $F = G-D$ remain in $\bigcup_{S \in \mathcal{S}_G} S$, then we can conclude that the instance has no solution (\Cref{line:fpt-fvn:q-no}), as we show in \Cref{lem:large-q}.

\begin{algorithm}[t!]
	\caption{The algorithm for \cref{thm:fpt-fvn}. We assume that \Cref{rr:part_of_a_biclique} is exhaustively applied throughout.}
	\label{alg:fpt-fvn}
	\small
	\SetAlgoLined
	\SetKwInOut{Input}{Input}
	\SetKwInOut{Output}{Task}
	\Input{A graph $G$, integers $i \le j$, $k$, and a feedback vertex set $D$.}

	\lIf{$j \le \fvn + 1$}{\KwRet the result of the algorithm of \cref{obs:hitting-set-case}.} \label{line:fpt-fvn:hitting-set} 
	\textbf{guess} $D' \subseteq D$. Remove~$D'$ from $G$ and~$D$, set $k \gets k-\abs{D'}$.\; \label{line:fpt-fvn:guess-d}
	$F \gets G-D$.
	Root $F$ arbitrarily.\;
	$R \gets \{v \in V(F) \mid \abs{N_F[v] \cap \bigcup_{S \in \mathcal{S}_G} S} \ge 3 \}$.\;
	\lIf{$\abs{R} > 3k$}{\KwRet \no. (\Cref{lem:large-r})} \label{line:fpt-fvn:r-no}
	\textbf{guess} $R' \subseteq R$, $\abs{R'} \le k$. Remove~$R'$ from $G$ and $F$, set $k \gets k-\abs{R'}$.\; \label{line:fptt-fvn:guess-r}
	\lIf{$\abs{V(F) \cap \bigcup_{S \in \mathcal{S}_G} S} > 2k$}{\KwRet \no. (\Cref{lem:large-q})} \label{line:fpt-fvn:q-no}
	\KwRet the result of the algorithm of \Cref{prop:fpt-for-vc}.\; \label{line:final}
\end{algorithm}

The following observation that at most one vertex of $F$ appears in the smaller side of a biclique becomes crucial to establish the correctness of \Cref{alg:fpt-fvn}.

\begin{observation}
	\label{obs:q-center}
	If~$j > \fvn + 1$, then the smaller side~$S$ of every biclique~$(S, T)$ contains at most one vertex of~$V(F)$.
\end{observation}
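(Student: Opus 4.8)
The plan is to argue by contradiction, exploiting the only structural fact available about $F$, namely that it is a forest. Suppose that the smaller side $S$ of some biclique $(S, T)$ contains two distinct vertices $u, v \in V(F)$. Since $(S, T)$ is a biclique, every vertex of $T$ is adjacent to both $u$ and $v$; in particular, every vertex of $T$ that happens to lie in $V(F)$ is a common neighbor of $u$ and $v$ inside the forest $F$.

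The key step is the elementary observation that two distinct vertices of a forest have at most one common neighbor: if $u$ and $v$ shared two common neighbors $t_1, t_2 \in V(F)$, then the edges $u t_1,\, t_1 v,\, v t_2,\, t_2 u$ would form a cycle of length four in $F$, contradicting acyclicity. Hence at most one vertex of $T$ can belong to $V(F)$, which forces the remaining vertices of $T$ into the feedback vertex set, i.e.\ $\abs{T \cap D} \ge \abs{T} - 1 = j - 1$.

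Finally, since $D$ is a feedback vertex set with $\abs{D} \le \fvn$, we obtain $j - 1 \le \abs{T \cap D} \le \abs{D} \le \fvn$, that is $j \le \fvn + 1$, contradicting the hypothesis $j > \fvn + 1$. Therefore $S$ contains at most one vertex of $V(F)$, as claimed. I do not expect any genuine obstacle in this argument; the only points requiring care are the forest common-neighbor fact (which rules out the $4$-cycle) and correctly charging the surplus vertices of $T$ against $D$ to derive the numerical contradiction.
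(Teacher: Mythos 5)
Your proof is correct and follows exactly the reasoning the paper intends: the paper states this as an unproved observation, and the implicit argument is precisely yours. Two vertices of $S$ lying in the forest $F$ would need $j$ common neighbors, but since $F = G - D$ is an induced forest they can share at most one common neighbor inside $F$ (else a $4$-cycle), so at least $j - 1 > \fvn$ vertices of $T$ would have to lie in $D$, contradicting $\abs{D} \le \fvn$.
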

\begin{proof}
	If~$j > \fvn + 1$, then for each biclique $(S, T)$, there are two vertices $u, v \in T \cap V(F)$.
	If there are two vertices in $S \cap V(F)$, then they induce a cycle with $u$ and $v$ in the forest, a contradiction.
\end{proof}

The following lemma shows that \Cref{line:fpt-fvn:r-no} is correct.
Since we delete the intersection $D' = V' \cap D$ from the graph in \Cref{line:fpt-fvn:guess-d}, we may assume that $V' \subseteq V(F)$.

\begin{lemma}
	\label{lem:large-r}
	If $\abs{R} > 3k$ in \cref{line:fpt-fvn:r-no}, then every set~$V' \subseteq V(F)$ that intersects every~$K_{i,j}$ contains more than $k$ elements.
\end{lemma}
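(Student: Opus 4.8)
The plan is to establish, by contraposition, that every $V' \subseteq V(F)$ hitting all $K_{i,j}$ satisfies $|V'| \ge |R|/3$; combined with the hypothesis $|R| > 3k$ this yields $|V'| > k$, which is exactly the claim. Throughout we are in the case $j > \fvn + 1$, since otherwise the algorithm already returned in \cref{line:fpt-fvn:hitting-set}, so \cref{obs:q-center} is available; and we may assume $V' \subseteq V(F)$ as $D' = V' \cap D$ has been removed. For brevity set $Q \coloneqq V(F) \cap \bigcup_{S \in \mathcal{S}_G} S$, so that, since $N_F[v] \subseteq V(F)$, we have $R = \{v \in V(F) \mid |N_F[v] \cap Q| \ge 3\}$.

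The first and most important step is a local domination statement: for every $a \in Q$ and every hitting set $V' \subseteq V(F)$ we have $V' \cap N_F[a] \ne \emptyset$. To see this, note that $a$ lies in the smaller side $S$ of some biclique $(S, T)$ (\cref{rr:part_of_a_biclique} has been applied, so $a$ genuinely is part of a biclique), and by \cref{obs:q-center} it is the \emph{unique} vertex of $S$ in $V(F)$. Every vertex of $T$ is adjacent to $a$, so $T \cap V(F) \subseteq N_F(a)$, and hence all vertices of this biclique that lie in $V(F)$ belong to $\{a\} \cup N_F(a) = N_F[a]$. Since $V' \subseteq V(F)$ must contain a vertex of $S \cup T$, it must contain a vertex of $N_F[a]$.

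It remains to convert this into the bound $|R| \le 3|V'|$. I would fix a hitting set $V'$ and, for every $a \in Q$, a witness $w_a \in V' \cap N_F[a]$; thus $w_a$ is either $a$ or a neighbour of $a$ in $F$. Every $v \in R$ carries at least three demands $a \in N_F[v] \cap Q$, each supplied by such a witness, and the goal is to charge $v$ to one of these witnesses so that no vertex of $V'$ receives more than three charges. Acyclicity is the engine here: if two distinct demands $a, a' \in N_F[v] \cap Q$ were supplied by a common witness $w \neq v$, then $v$, $a$, $w$, $a'$ would span a cycle of length at most four in $F$, which is impossible; the same argument limits, for a fixed $w \in V'$, how many vertices of $R$ can route a demand through $w$, since any two such vertices together with their demands and $w$ would again close a short cycle.

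Turning this intuition into a clean load bound is the main obstacle, and it is where the constant $3$ is paid for. I would root $F$ and, for each $v \in R$, split $N_F[v] \cap Q$ into the parent, the vertex $v$ itself, and the children, charging $v$ to the witness of a canonically chosen demand (for instance the one of smallest depth). A short case distinction on the position of this demand relative to $v$—ruling out, via the no-triangle and no-$C_4$ properties of the forest, that more than three members of $R$ send their canonical charge to a common vertex of $V'$—then gives $|R| \le 3|V'|$. With $|R| > 3k$ this forces $|V'| > k$, completing the proof.
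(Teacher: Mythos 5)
Your first step is correct and is exactly the paper's engine: using \cref{obs:q-center} (available because the algorithm already handled $j \le \fvn + 1$), every $a \in V(F) \cap \bigcup_{S \in \mathcal{S}_G} S$ is the unique $V(F)$-vertex in the smaller side of its biclique, so the whole biclique meets $V(F)$ only inside $N_F[a]$, and any $V' \subseteq V(F)$ hitting all bicliques must meet $N_F[a]$. The gap is in your second step, the charging argument. Your proposed canonical choice (the demand of \emph{smallest} depth) provably fails: if a vertex $a \in Q$ has many children $v_1, \dots, v_m \in R$, then $a$ is the minimum-depth demand of every $v_\ell$, so all $m$ of them charge the single witness $w_a$, and the load is $m$, not $3$. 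Your acyclicity justification does not repair this: the absence of triangles and $C_4$'s only rules out one vertex $v$ having two \emph{distinct} demands served by a common witness; it says nothing when many vertices of $R$ route the \emph{same} demand (or demands forming a path through $w$) to one witness --- two such vertices $v, v'$ together with their demands and $w$ form a path in $F$, not a cycle. So the claimed bound ``at most three members of $R$ charge a common vertex of $V'$'' is false for your scheme, and this is precisely the step you flagged as the main obstacle.

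The statement $\abs{R} \le 3\abs{V'}$ you are aiming for is true, but it needs the canonical demand to be chosen in the \emph{opposite} direction: since $N_F[v]$ consists of $v$, its parent, and its children, and $\abs{N_F[v] \cap Q} \ge 3$, every $v \in R$ has a demand $q_v \in Q$ that is a \emph{child} of $v$. Charging $v$ to a witness in $V' \cap N_F[q_v]$ then works, because $w \in N_F[q_v]$ forces $v \in \{w, \mathrm{parent}(w), \mathrm{grandparent}(w)\}$, so each $w \in V'$ receives at most three charges. This child-demand observation is exactly what the paper uses, though it avoids charging altogether: it partitions $R$ by depth modulo $3$, picks a class $R_0$ with $\abs{R_0} > k$, and notes that for $v \in R_0$ the sets $U_v \cap V(F) \subseteq N_F[q_v]$ live in pairwise disjoint depth-$2$ subtrees below the respective $v$, so $V'$ needs a distinct vertex for each $v \in R_0$. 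Either route closes the gap; as written, your proof does not.
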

\begin{proof}
	Partition~$R$ into three sets~$R_0, R_1, R_2$ such that~$v$ is in~$R_\delta$ if the distance from~$v$ to the root of the same component is $\delta$ modulo $3$.
	At least one of the partitions, say $R_0$, contains more than $k$ elements.
	By the definition of $R$, we have $|N_F[v] \cap \bigcup_{S \in \mathcal{S}_G} S| \ge 3$ for every $v \in R$.
	Note that $N_F[v]$ consists of $v$ itself, its parent (if it exists), and its child(ren).
	It follows that $v$ has a child $q_v \in \bigcup_{S \in \mathcal{S}_G} S$.
	Let $(S_v, T_v)$ be an arbitrary biclique $K_{i,j}$ with $q_v \in S_v$ and let $U_v = S_v \cup T_v$.
	By \Cref{obs:q-center}, we have $S_v \cap V(F) = \{ q_v \}$.
	Thus, $T_v \cap V(F) \subseteq N_F(q_v)$.
	As $q_v \in R_1$, we have $N_F(q_v) \cap R_0 = \{v\}$, and all remaining neighbors of $q_v$ are in $R_2$.
	Now pick $v' \in R_0$ with $v' \ne v$ and define $S_{v'}$, $T_{v'}$, $U_{v'}$ and $q_{v'}$ analogously.
	Then $N_F[q_v] \cap N_F[q_{v'}] = \emptyset$ as $v \ne v'$ and $q_v \ne q_{v'}$;
	the remaining vertices in the neighborhoods are children of either $q_v$ or $q_{v'}$ and thus cannot be equal either.
	Consequently $U_v \cap V(F)$ and $U_{v'} \cap V(F)$ are disjoint.
	Hence, a set $V' \subseteq V(F)$ intersecting every biclique contains at least one vertex of $U_v \cap V(F)$ for every vertex in $v \in R_0$.
	Thus, $\abs{V'} \ge \abs{R_0} > k$.
\end{proof}

Next, we show that \Cref{line:fpt-fvn:q-no} is correct.
In \Cref{line:fptt-fvn:guess-r}, we delete the vertices of $R$ included in the hypothetical solution $V'$.
We thus may assume that $V'$ is disjoint from $R$.

\begin{lemma}
	\label{lem:large-q}
	If $|V(F) \cap \bigcup_{S \in \mathcal{S}_G} S| > 2k$ in \cref{line:fpt-fvn:q-no}, then every set~$V' \subseteq V(F) \setminus R$ that intersects every $K_{i,j}$ contains more than $k$ elements.
\end{lemma}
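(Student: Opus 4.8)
The plan is to exhibit more than $2k$ ``spread-out'' bicliques that $V'$ must hit, arranged so that each admissible deletion vertex can take care of at most two of them. Write $Q = V(F) \cap \bigcup_{S \in \mathcal{S}_G} S$ for the set of forest vertices that appear on the smaller side of some biclique, so that the hypothesis reads $|Q| > 2k$. For every $q \in Q$ I would fix one biclique $(S_q, T_q)$ with $q \in S_q$, which exists by the definition of $\mathcal{S}_G$. Since any $V' \subseteq V(F) \setminus R$ that intersects every $K_{i,j}$ must in particular intersect each $(S_q, T_q)$, it suffices to bound how many of these $|Q|$ bicliques a single vertex of $V'$ can destroy.

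The key structural step is to show that the forest vertices of the chosen biclique all lie in the closed forest-neighborhood of $q$, that is, $(S_q \cup T_q) \cap V(F) \subseteq N_F[q]$. For the smaller side this is immediate from \cref{obs:q-center}, which applies since $j > \fvn + 1$ at this point in the algorithm: it gives $S_q \cap V(F) = \{q\}$. For the larger side, every vertex of $T_q$ is adjacent to $q$ in $G$, hence $T_q \cap V(F) \subseteq N_G(q) \cap V(F) = N_F(q)$, the last equality holding because $F = G - D$ is an induced subgraph. Combining the two inclusions yields the claim.

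With the inclusion in hand, consider any $w \in V'$. As $w \in V(F)$, it can destroy $(S_q, T_q)$ only if $w \in (S_q \cup T_q) \cap V(F) \subseteq N_F[q]$, equivalently $q \in N_F[w] \cap Q$. Because $V' \cap R = \emptyset$ and $N_F[w] \subseteq V(F)$, the defining property of $R$ gives $|N_F[w] \cap Q| \le 2$, so $w$ destroys at most two of the bicliques $\{(S_q, T_q) : q \in Q\}$. Charging each $q \in Q$ to some vertex of $V'$ hitting $(S_q, T_q)$ defines a map $Q \to V'$ whose fibers have size at most two, whence $|Q| \le 2\,|V'|$ and therefore $|V'| \ge |Q|/2 > k$. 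I expect the main obstacle to be the structural inclusion $(S_q \cup T_q) \cap V(F) \subseteq N_F[q]$; once it is established the counting is routine, and it is precisely this inclusion that makes the threshold $3$ in the definition of $R$ the correct choice.
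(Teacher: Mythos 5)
Your proof is correct and takes essentially the same route as the paper's: both hinge on \cref{obs:q-center} to confine each biclique's forest vertices to $N_F[q]$, and on the definition of $R$ to bound $|N_F[w] \cap Q| \le 2$ for every $w \in V' \subseteq V(F) \setminus R$, with the same threshold $|Q| > 2k$. The only cosmetic difference is that you run the count as a direct charging argument yielding $|Q| \le 2|V'|$, whereas the paper argues by contradiction, using the same sum to exhibit a vertex $q \in Q \setminus N[V']$ whose biclique then avoids $V'$ entirely.
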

\begin{proof}
	Suppose that there exists a set $V' \subseteq V(F) \setminus R$ that intersects every $K_{i,j}$ of size at most $k$.
	By the definition of $R$, every vertex $v' \in V'$ has $|N_F[v'] \cap \bigcup_{S \in \mathcal{S}_G} S| \le 2$.
	As $|V(F) \cap \bigcup_{S \in \mathcal{S}_G} S| > 2k \ge 2 |V'|$, we have
	\[
		\Big\lvert\Big(V(F) \cap \bigcup_{S \in \mathcal{S}_G} S\Big) \setminus N[V']\Big\rvert \ge \Big\lvert V(F) \cap \bigcup_{S \in \mathcal{S}_G} S\Big\rvert - \sum_{v' \in V'} \big\lvert N_F[v'] \cap \bigcup_{S \in \mathcal{S}_G} S\big\rvert > 0.
	\]
	This implies the existence of a vertex $v \in (V(F) \cap \bigcup_{S \in \mathcal{S}_G} S) \setminus N[V']$.
	Let $(S_v, T_v)$ be an arbitrary biclique with $v \in S_v$ and let $U_v = S_v \cup T_v$.
	By \Cref{obs:q-center}, we have $U_v \cap V(F) \subseteq N_F[v]$.
	Since $v \notin N[V']$, $V' \subseteq V(F)$ does not intersect $U_v$, a contradiction.
\end{proof}

Finally, we analyze the running time of \Cref{alg:fpt-fvn}.

\begin{lemma}
	\label{lem:fpt-fvn-time}
	\cref{alg:fpt-fvn} runs in time $\bigO^*(2^{\bigO(k^2 + \fvn \cdot k)})$.
\end{lemma}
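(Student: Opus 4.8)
The plan is to account for the running time line by line, multiplying the branching factors incurred by the two \textbf{guess} steps with the worst-case cost of the work done within a single branch. Throughout, I will repeatedly use that a graph with feedback vertex number $\fvn$ has degeneracy $d \le \fvn + 1$, so that every invocation of Eppstein's enumeration and of \Cref{rr:part_of_a_biclique} costs $\bigO^*(4^d) = \bigO^*(2^{\bigO(\fvn)})$; this keeps the exhaustive application of the reduction rule (which the algorithm assumes throughout) and the enumeration of $\mathcal{S}_G$ within budget.

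First I would dispatch the easy exit at \Cref{line:fpt-fvn:hitting-set}: when $j \le \fvn + 1$ we call the algorithm of \Cref{obs:hitting-set-case}, whose running time $\bigO^*(4^d \cdot (i+j)^k)$ is bounded by $\bigO^*(2^{\bigO(\fvn)} \cdot (2\fvn + 2)^k)$; since $\log(2\fvn+2) = \bigO(\fvn)$, this is $\bigO^*(2^{\bigO(\fvn k)})$, well within the claimed bound. For the main branch I count the guesses: \Cref{line:fpt-fvn:guess-d} ranges over the $2^{|D|} \le 2^{\fvn}$ subsets $D' \subseteq D$, and, since the algorithm only reaches \Cref{line:fptt-fvn:guess-r} when $|R| \le 3k$ (otherwise it stops at \Cref{line:fpt-fvn:r-no}), \Cref{line:fptt-fvn:guess-r} ranges over at most $\sum_{\ell \le k} \binom{3k}{\ell} \le 2^{3k}$ subsets $R'$. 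Hence the total number of branches is $2^{\fvn} \cdot 2^{\bigO(k)} = 2^{\bigO(\fvn + k)}$, and all the bookkeeping (computing $F$ and $R$, and the two size checks) is polynomial.

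The crux is bounding the cost of the final call to \Cref{prop:fpt-for-vc} at \Cref{line:final}. Its running time is $\bigO^*(2^{\bigO(\vc \cdot k)})$, so I must argue that the current graph has a small vertex cover. Since \Cref{rr:part_of_a_biclique} is applied exhaustively, \Cref{lemma:s-is-vc} guarantees that $\bigcup_{S \in \mathcal{S}_G} S$ is a vertex cover, so it suffices to bound $\sside(G)$. I split this set according to $D$ and $F = G - D$: its intersection with the (remaining) feedback vertex set $D$ has size at most $\fvn$, while its intersection with $V(F)$ has size at most $2k$, because the algorithm reaches \Cref{line:final} only when the check at \Cref{line:fpt-fvn:q-no} fails, i.e.\ $|V(F) \cap \bigcup_{S \in \mathcal{S}_G} S| \le 2k$. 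Therefore $\vc(G) \le \sside(G) \le \fvn + 2k$, and the final call runs in $\bigO^*(2^{\bigO((\fvn + 2k) k)}) = \bigO^*(2^{\bigO(\fvn k + k^2)})$.

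Multiplying the $2^{\bigO(\fvn + k)}$ branches by the per-branch cost $\bigO^*(2^{\bigO(\fvn k + k^2)})$ and absorbing the polynomial preprocessing yields the claimed bound $\bigO^*(2^{\bigO(k^2 + \fvn \cdot k)})$, using $\fvn + k \le \fvn k + k^2$ for $k \ge 1$. I expect the only genuinely delicate point to be the vertex-cover bound at \Cref{line:final}: one must be careful that after deleting $D'$ and $R'$ the set $D$ still forms a feedback vertex set of size at most $\fvn$ and that $\mathcal{S}_G$ refers to the current (reduced) graph, so that the size checks underlying \Cref{lem:large-r,lem:large-q} legitimately cap the two parts of $\sside(G)$; everything else is routine branching-factor arithmetic.
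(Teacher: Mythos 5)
Your proposal is correct and follows essentially the same route as the paper's proof: bound the number of branches from the two guesses by $2^{\bigO(\fvn + k)}$, use the failed checks at \cref{line:fpt-fvn:r-no,line:fpt-fvn:q-no} together with \cref{lemma:s-is-vc} to bound the vertex cover size by $\bigO(k + \fvn)$ at \cref{line:final}, and invoke \cref{prop:fpt-for-vc}. You are in fact slightly more careful than the paper, which lumps the cost of \cref{line:fpt-fvn:hitting-set} and the exhaustive applications of \cref{rr:part_of_a_biclique} into an $\bigO^*(4^{\fvn})$ "elsewhere" term, whereas you verify these explicitly via the degeneracy bound $d \le \fvn + 1$.
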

\begin{proof}
	In \Cref{alg:fpt-fvn}, we guess $D'$ and $R'$.
	There are $2^{\bigO(\fvn)}$ choices for~$D'$ and $2^{3k}$ choices for $R'$, amounting to $2^{\bigO(k + \fvn)}$ choices.
	Since we have $\sside(G) \le |V(F) \cap \bigcup_{S \in \mathcal{S}_G} S| + |D \cap \bigcup_{S \in \mathcal{S}_G} S| \le 3k + \fvn$ in \Cref{line:final} and $\bigcup_{S \in \mathcal{S}_G} S$ is a vertex cover of $G$ by \Cref{lemma:s-is-vc}, the algorithm of \Cref{prop:fpt-for-vc} runs in time $\bigO^*(2^{\bigO(k^2 + \fvn \cdot k)})$.
	We spend $\bigO^*(4^{\fvn})$ time elsewhere; hence \Cref{alg:fpt-fvn} runs in the claimed time.
\end{proof}

The correctness of \Cref{alg:fpt-fvn} follows from \Cref{obs:hitting-set-case,lem:large-q,lem:large-r,prop:fpt-for-vc} and it runs in time $\bigO^*(2^{\bigO(k^2 + \fvn \cdot k)})$ by \Cref{lem:fpt-fvn-time}.
This proves \Cref{thm:fpt-fvn}.

\section{Parameterized hardness}
\label{sec:hardness}
\citet{Ganian2021} show that \BFVD{} is \Wone-hard with respect to the tree\-depth of the input graph if $i=1$.
We show that this holds true for every fixed value of $i$.

\begin{theorem}
	For every fixed $i$, \BFVD{} is \Wone-hard when parameterized by the tree\-depth.
\end{theorem}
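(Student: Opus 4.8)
The plan is to give a parameterized reduction from the case $i = 1$, i.e.\ from \BDD{} parameterized by treedepth, whose W[1]-hardness is due to Ganian et al. Starting from an instance $(G, r, k)$ of \BDD{}---equivalently the instance $(G, 1, r+1, k)$ of \BFVD{} that forbids $K_{1, r+1}$, that is, forbids vertices of degree at least $j \coloneqq r+1$---I would build an instance $(G', i, j, k)$ of \BFVD{} for the fixed target value $i$. The idea is to replace each vertex $v$ whose degree is being tested by an independent set $I_v = \{v\} \cup \{h^v_1, \ldots, h^v_{i-1}\}$ of $i$ \emph{false twins}, where each added twin $h^v_\ell$ is made adjacent to exactly $N_G(v)$ and to nothing else. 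Then the smaller side $I_v$ of size $i$ has common neighborhood exactly $N_G(v)$, so $(I_v, \cdot)$ is a $K_{i,j}$ in $G'$ if and only if $\deg_G(v) \ge j$; in other words, the single forbidden $K_{1,j}$ at a high-degree vertex $v$ becomes a forbidden $K_{i,j}$ whose smaller side is $I_v$ and whose larger side is drawn from $N_G(v)$.

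For correctness I would prove that every $K_{i,j}$ in $G'$ is \emph{centered}, meaning that its smaller side is $I_v$ for some vertex $v$ with $\deg_G(v) \ge j$ and its larger side lies in $N_G(v)$. Granting this, the two directions are immediate. Forward: a \BDD{}-solution $V' \subseteq V(G)$ that makes $G - V'$ have maximum degree below $j$ hits every centered biclique, since for its center $v$ either $v \in V'$ or fewer than $j$ vertices of $N_G(v)$ survive. Backward: given a solution $W'$ of $(G', i, j, k)$, the crucial observation is that every biclique through a twin $h^v_\ell$ is centered at $v$ and therefore also contains $v$; thus $v$ dominates $h^v_\ell$, and replacing every deleted twin by its center yields a solution of the same size that uses only original vertices. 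Such a $W' \subseteq V(G)$ must then make $G - W'$ of maximum degree below $j$, for otherwise some surviving $v$ with at least $j$ surviving neighbors would leave its centered $K_{i,j}$ intact in $G' - W'$. This domination argument is what lets me avoid the usual ``make gadget vertices undeletable by adding $k+1$ copies'' trick, which here would blow up the treedepth.

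The treedepth is controlled because $G'$ is a subgraph of the $i$-fold blow-up of $G$ (each vertex replaced by an independent set of size $i$, each edge by a complete bipartite graph): expanding every node of an optimal elimination forest of $G$ into a path on its $i$ copies gives a valid elimination forest of the blow-up, so $\td(G') \le i \cdot \td(G)$, a bounded function of the parameter for fixed $i$. The main obstacle is establishing the centeredness claim, and this is exactly where the construction must be set up with care: adding the twins also enlarges the common neighborhoods of the original vertices---a twin $h^x_\ell$ becomes a common neighbor of every pair $a, b$ with $x \in N_G(a) \cap N_G(b)$---so two distinct original vertices sharing only $j-1$ neighbors in $G$ can acquire up to $i(j-1) \ge j$ common neighbors in $G'$ and thereby span a spurious, non-centered $K_{i,j}$. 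To rule this out I would ensure that the degree-tested vertices of the source instance have pairwise small common neighborhoods---concretely, that any two of them share fewer than $j / i$ neighbors---which can be arranged by taking the selector or guard vertices of the hardness construction to form an independent set with essentially disjoint neighborhoods, or by a normalization step that subdivides edges to destroy common neighborhoods, so that only the degree-tested vertices need to be blown up. Verifying that such a normalized $i = 1$ instance is still furnished by the known hardness construction, and that blowing up only those vertices keeps the treedepth bounded, is the technical heart of the argument.
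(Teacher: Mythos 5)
Your overall strategy is the same as the paper's: reduce from \BDD{} parameterized by treedepth \cite{Ganian2021}, expand each vertex $v$ into a set $I_v$ of $i$ near-twins that serves as the smaller side of a biclique certifying that $v$ has high residual degree, transfer solutions in both directions via a centering/domination argument, and bound $\td(G')$ by roughly $i\cdot\td(G)$. However, there is a genuine gap, and it is exactly the one you flag yourself: the centeredness claim is \emph{false} for your unpadded construction with $j = r+1$, and you never establish the extra hypothesis (pairwise common neighborhoods below $j/i$) that would restore it. Concretely, let $G$ consist of two nonadjacent vertices $u,v$ and a set $C$ of $r+1$ common neighbors of degree two, with $r\ge 4$, $i=2$, $k=1$. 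Deleting one vertex of $C$ is a valid \BDD{} solution, so $(G,k,r)$ is a yes-instance; but in $G'$ the pair $\{u,v\}$ has $2(r+1)$ common neighbors (the vertices of $C$ \emph{plus all their twins, which no solution inside $V(G)$ ever deletes}), so killing this spurious $K_{2,r+1}$ with one deletion forces deleting $u$ or $v$, which leaves the centered biclique $(I_v,C)$ or $(I_u,C)$ intact; hence the constructed \BFVD{} instance is a no-instance. Your proposed repairs do not close this gap: subdividing edges does not preserve \BDD{} (after subdividing $uv$, deleting $u$ no longer lowers the degree of $v$, and the subdivision vertices carry their own degree constraints), the notion of ``degree-tested vertices'' is vacuous for \BDD{} since every vertex is constrained, and appealing to unverified structural properties of the hard instances of \cite{Ganian2021} is not a proof.

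The paper closes exactly this hole with a device your construction omits: each group $S'_v=\{v\}\cup S_v$ is additionally made complete to a \emph{private} padding set $T_v$, and the threshold is raised to $j = \lvert T_v\rvert + r + 1$, so the degree test is carried by the private pads and $j$ sits far above anything two vertices from different groups can share; this makes the reduction work from \emph{arbitrary} \BDD{} instances, with no assumption on the source. It is worth noting that your amplification observation is sharp enough to show the calibration matters even there: the paper takes $\lvert T_v\rvert = n$ and $j = n+r+1$, yet two original vertices with $c$ common $G$-neighbors have $i\cdot c$ common $G'$-neighbors (twins of deleted originals survive), which can reach $n+r+1$ --- so the same spurious bicliques you worry about can arise with padding of size $n$ when $i\ge 3$. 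Choosing the padding of size larger than $i\cdot n$, with $j=\lvert T_v\rvert + r+1$, makes every cross-group common neighborhood, even after the factor-$i$ amplification, fall strictly below $j$; then centeredness, both transfer directions, and $\td(G')\le i\cdot\td(G)+1$ all go through. In short, the missing idea in your write-up is not a property to be extracted from the source instance, but private padding with a threshold chosen above the worst-case amplified shared neighborhood.
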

\begin{proof}
	We reduce from \BDD{}, which is \Wone-hard when parameterized by tree\-depth \citep{Ganian2021}.
	Given an instance $(G, k, r)$ of \BDD{}, we construct an instance $(G', k, i, j)$ as follows.
	We set~$j = n + r + 1$, where $n$ is the number of vertices in $G$.
	We will assume that $n > i$.
	For the construction of $G'$, we start with a copy of $G$.
	For every $v \in V(G)$, we introduce $i - 1$ vertices $S_v = \{ s_v^1, \dots, s_v^{i - 1} \}$ and $n$ vertices $T_v = \{ t_v^1, \dots, t_v^{n} \}$ and add edges such that $S_v' = \{ v \} \cup S_v$ and $T_v$ form a biclique $K_{i,n}$.
	Moreover, for every edge $uv \in E(G)$, we add an edge between $u$ and $s_v$ for every $s_v \in S_v$.
	
	Suppose that $(G, k, r)$ has a solution $V'$.
	We claim that $V'$ is also a solution of $(G', k, i, j)$.
	Suppose to the contrary that $G - V'$ has a biclique $K_{i, j}$.
	Then, its smaller side is $S_v'$ for some vertex $v \in V(G)$ and its larger side is a subset of $(N_G(v) \setminus V') \cup T_v$.
	To see why, observe that $V(G) \cup \bigcup_{v \in V(G)} S_v$ constitutes the set of vertices of degree at least $n > i$ and that two vertices have at least $n$ common neighbors in $G'$ if and only if they belong to the same $S_v'$.
	In particular, it holds that $v \notin V'$. 
	Since every vertex in $V \setminus V'$ has degree at most $r$ in $G - V'$, we have $|N_G(v) \setminus V'| \le r$ and thus $|(N_G(v) \setminus V') \cup T_v| \le n + r$, a contradiction.

	Conversely, suppose that $V'$ is a solution of $(G', k, i, j)$.
	We claim that the set $V'' = \{ v \in V(G) \mid (S_v' \cup T_v) \cap V' \ne \emptyset \}$ is a solution of $(G, k, r)$.
	Note that $|V''| \le |V'| \le k$.
	Suppose that there exists a vertex $v \in V(G) \setminus V'$ of degree greater than $r$ in $G - V''$.
	Then, $S_v'$ is of size $i$ and it has at least $|(N_G(v) \setminus V') \cup T_v| \ge j$ common neighbors.
	We thus conclude that $G - V''$ is a solution.

	Finally, we show that $\td(G') \le i \cdot \td(G) + 1$ by providing a rooted tree $T'$ of depth $\td(G')$ in which $G'$ is embedded.
	The tree is based on a rooted tree $T$ of depth $\td(G)$ in which $G$ is embedded.
	We replace every vertex~$v \in V(T)$ with a path consisting of the vertices in~$S'_v$ and attach the children of $v$ in $T$ to the lowermost (furthest from the root) vertex in~$S'_v$.
	Then we add each~$t_v \in T_v$ as a leaf to the lowermost vertex in~$S'_v$.
	Note that any ancestor $u \in V(G)$ of $v$ is now ancestor of all vertices in~$S'_v$, and each vertex in~$S'_v$ is ancestor of each vertex in~$T_v$;
	thus $G'$ is embedded in $T'$.
	As we replace every vertex with a path of length $i$, and attach at most one child at the bottom of the path, the depth of $T'$ is at most $i \cdot \td(G) + 1$.
\end{proof}

\section{Polynomial kernel with respect to feedback edge number}
\label{ssec:bdd_fen}
In this section, we show that \BFVD{} admits a polynomial kernel when parameterized by the feedback edge number $\fen$.

\begin{theorem}
	\label{thm:bdd-fen-kernel}
	\BFVD{} admits a kernel of size $\bigO(\fen^2)$ for $i = 1$ and $\bigO(\fen)$ for $i \ge 2$.
\end{theorem}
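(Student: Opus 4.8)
The plan is to reduce, in both cases, to the structure imposed by a minimum feedback edge set. Throughout I assume \Cref{rr:part_of_a_biclique} has been applied exhaustively, so that every vertex and every edge lies in some $K_{i,j}$ and, by \Cref{lemma:s-is-vc}, the set $A := \bigcup_{S \in \mathcal{S}_G} S$ is a vertex cover. I fix a feedback edge set $F$ with $|F| = \fen$, write $\mathcal{F} = G - F$ for the resulting forest, and let $W$ be the set of endpoints of $F$, so $|W| \le 2\fen$. The two regimes $i \ge 2$ and $i = 1$ are handled separately, matching the two bounds.

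For $i \ge 2$, every vertex of a $K_{i,j}$ has degree at least $i \ge 2$ (exactly $i$ on the larger side, at least $j \ge i$ on the smaller side), so after \Cref{rr:part_of_a_biclique} the minimum degree is at least $2$; hence there are no leaves, and a handshake count against $|E(G)| = |V(G)| - c + \fen$ (where $c$ is the number of components) gives at most $2\fen$ vertices of degree $\ge 3$. For $i \ge 3$ the minimum degree is at least $3$, which already forces $|V(G)| \le 2\fen$, so I may focus on $i = 2$. It then remains to bound the degree-$2$ vertices. The plan is to contract each maximal path of degree-$2$ vertices, obtaining a multigraph on the $\bigO(\fen)$ vertices of degree $\ge 3$ with $\bigO(\fen)$ edges. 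The key observation is that an internal degree-$2$ vertex $v$ of such a path has both neighbours of degree $2$, so the only common neighbour of $N(v)$ is $v$ itself; since membership of $v$ in a $K_{2,j}$ forces its two neighbours to have a second common neighbour, no such $v$ can lie in a biclique and it is removed by \Cref{rr:part_of_a_biclique}. Consequently only degree-$2$ vertices adjacent to a degree-$\ge 3$ vertex survive, and these occur only in the $\bigO(\fen)$ ``parallel bundles'' created by the feedback edges, yielding $|V(G)| = \bigO(\fen)$ and hence the $\bigO(\fen)$ kernel.

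The case $i = 1$ is \BDD{} with $r = j - 1$ and is the crux. Here a forest does contain bicliques (a vertex of degree $\ge j$ is a $K_{1,j}$), so the yes-instance shortcut fails and leaves genuinely matter, since each feeds the degree of its parent. I would use three reduction rules. First, a leaf-capping rule: if a vertex $v$ has at least $r + 2$ leaf neighbours, then a swap argument (replacing two deleted leaves by $v$ strictly improves any solution that keeps $v$) shows $v$ lies in every optimal solution, so it can be deleted while setting $k \gets k - 1$. Second, a path-shortening rule replacing long induced paths of degree-$\le r$ vertices by paths of constant length, which is answer-preserving since their interior is never in an optimal solution. Third, and most importantly, a protrusion-style rule: every pendant subtree $P$ of $\mathcal{F}$ that attaches to the rest of $G$ at a single vertex $b$ and contains no endpoint of $F$ is solved by a bottom-up tree dynamic program, conditioned on the constant number of interface states (whether $b$ is deleted, and whether the root of $P$ survives), and then replaced by a constant-size unweighted gadget reproducing the same interface costs together with a matching decrement of $k$.

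After these rules only the \emph{backbone} remains: the minimal subforest of $\mathcal{F}$ spanning $W$, which is topologically of size $\bigO(\fen)$ and carries $\bigO(\fen)$ vertices of degree $\ge 3$. Each surviving high-degree vertex retains $\bigO(\fen)$ relevant neighbours, namely its capped leaf neighbours together with the degree-$\ge 2$ connectors through the independent set $V \setminus A$, whose number is controlled by the feedback edges, giving total size $\bigO(\fen^2)$. The main obstacle is exactly this last case: because $j$ is part of the input and unbounded, a high-degree vertex genuinely requires many neighbours, so one cannot shrink it by a small gadget; it is the combination of the forced-deletion rule (removing cheaply fixable high-degree vertices), the capping of leaves, and the exact pre-solving of pendant forests that prevents a blow-up in $j$ and confines everything to the $\bigO(\fen)$-sized backbone. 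I expect the most delicate point to be verifying that the pendant dynamic program can always be realised by a genuine unweighted \BDD{} gadget with the correct interface behaviour, and that the accompanying budget bookkeeping is consistent across all the rules.
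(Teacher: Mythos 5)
Your case $i \ge 2$ is essentially the paper's own argument: delete degree-one vertices and the middle vertex of any three consecutive degree-two vertices (neither can lie in a $K_{i,j}$ when $i \ge 2$), then count against the $\bigO(\fen)$ bound on high-degree vertices and maximal paths. One repair is needed there: a kernelization must run in polynomial time, so you cannot assume \Cref{rr:part_of_a_biclique} has been applied exhaustively (that requires enumerating bicliques, which takes $\bigO^*(4^d)$ time). Since your argument only ever deletes the two kinds of vertices above, which are recognizable by local degree checks, replacing the appeal to \Cref{rr:part_of_a_biclique} by those explicit local rules --- which is exactly what the paper's \Cref{rr:fen:bfvd-1,rr:fen:bfvd-2} do --- fixes this.

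The case $i = 1$ has a genuine gap, precisely at the step you flag as delicate: the pendant-tree replacement by a ``constant-size unweighted gadget'' is impossible, because $r$ is part of the input. Concretely, let the pendant tree be a root $\rho$ with exactly $r+1$ leaf children, attached to $b$ (your leaf-capping rule does not fire). Its interface costs are $1$ if $\rho$ is deleted, $1$ if $\rho$ is kept and $b$ is deleted, but $2$ if both $\rho$ and $b$ are kept. Any gadget on fewer than $r$ vertices has maximum degree below $r$, so both ``$\rho$ kept'' scenarios cost $0$ in it: no constant-size unweighted gadget can make keeping $b$ strictly more expensive, so the replacement is not answer-preserving. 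The same unbounded-$r$ issue invalidates your final count: after leaf-capping a vertex may retain $r+1$ pendant leaves, and nothing in your argument bounds $r$ by $\bigO(\fen)$, so the claimed $\bigO(\fen^2)$ size does not follow. (Your path-shortening rule has a related defect: for $r \le 1$, interior vertices of degree-two paths must be deleted in numbers growing with the path length, so that rule either does not apply or cannot be answer-preserving without adjusting $k$.) This is exactly why the paper takes the weighted detour: pendant structure is absorbed into vertex weights via \Cref{rr:wbdd_low_weight_rule,rr:wbdd_high_weight_rule,rr:wbdd_deg_0,rr:wbdd_deg_1}, degree-two paths are compressed by the computer-verified characteristic-matrix replacement (\Cref{lemma:shorter_path_base,rr:wbdd_path}) with an explicit decrement of $k$, and \Cref{rr:weight} forces some vertex to have weight zero, which yields $r < \deg(v) \in \bigO(\fen)$ in the reduced instance; only then are the weights re-expanded into at most $r \in \bigO(\fen)$ pendant leaves per vertex, giving $\bigO(\fen^2)$. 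Without weights --- or an equivalent compact encoding together with a proof that $r \in \bigO(\fen)$ after reduction --- your plan cannot be completed.
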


We start with the case $i = 1$. 
Then \BFVD{} is equivalent to \textsc{Bounded-Degree Deletion (BDD)} as there is a trivial parameter-preserving reduction (set $r = j - 1$, where $r$ is the degree bound of BDD and $j$ is the size of one of the biclique sides).
It is known that \BDD{} is fixed-parameter tractable for $\fen$ \citep{Betzler2012}.
We strengthen their result proving the existence of a polynomial kernel.

To develop a kernelization algorithm, we will work with the following generalization of \BDD{}.

\newcommand{\WBDD}{\textsc{Weighted Bounded-Degree Deletion}}
\newcommand{\wbdd}{\prob{\textsc{WBDD}}}
\problemdef{\WBDD{} (\wbdd{})}
{An undirected graph~$G$, two integers~$k, r \in \NN$, and weights~$w \in \NN^{V(G)}$.}
{Does there exist a subset~$V'\subseteq V(G)$ with~$\abs{V'}\le k$ such that each vertex~$v\in V(G)\setminus V'$ has degree at most~$r-w_v$ in~$G-V'$?}

Herein, by $w_v$ we denote the weight of a vertex $v$.
Note that \BDD{} is a special case of \wbdd{}, where $w_v = 0$ for each $v \in V$.

We use the weights in the following manner:
Suppose that for an instance of \wbdd{}, we identify a vertex $v$ which can be ``avoided'', that is, there is a solution $V'$ with $v \notin V'$.
Then we can simplify the instance as follows:
delete $v$ and increase the weight of every neighbor of $v$ by one.

To show that \BDD{} admits a kernel of size $\bigO(\fen^2)$, we first show that \wbdd{} has a kernel of size $\bigO(\fen)$ for constant $r$.
We then show how, given a \wbdd{} instance of size $\bigO(\fen)$, we can transform it into a \BDD{} instance of size $\bigO(\fen^2)$.

\paragraph{Linear kernel for WBDD.}
As a first step to obtain a linear kernel for \wbdd{}, we apply reduction rules based on $\deg(v)$ and $w_v$.
We first observe that our problem treats a vertex~$v$ the same whenever~$\deg(v)+w_v \le r$.
Hence, we may set the weight $w_v$ of such vertices to~$r-\deg(v)$.

\begin{rrule}
	\label{rr:wbdd_low_weight_rule}
	If $\deg(v) + w_v < r$, then increase~$w_v$ by one.
\end{rrule}

Next, if a weight of a vertex is too high, then it must be in any solution.
\begin{rrule}
	\label{rr:wbdd_high_weight_rule}
	If~$w_v > r$, then delete~$v$ and decrease~$k$ by one. %
\end{rrule}

After applying these two reduction rules, we have $r - \deg(v) \le w_v \le r$ for every vertex $v$.
In particular, we have $w_v = r$ for every isolated vertex $v$, which can be deleted.

\begin{rrule}
	\label{rr:wbdd_deg_0}
	Let~$v\in V$ be an isolated vertex of~$G$ with~$w_v = r$.
	Then, delete~$v$.
\end{rrule}
For a degree-one vertex $v$, we have~$w_v = r-1$ or~$w_v = r$.
In either case, it does not make sense to take~$v$ into the solution, as deleting its neighbor affects the degrees of at least as many vertices.

\begin{rrule}
	\label{rr:wbdd_deg_1}
	Let~$v\in V$ be a vertex of~$G$ with~$N(v) = \{u\}$.
	If $w_v = r - 1$, then delete $v$ and increase $w_u$ by one.
	If~$w_v = r$, then delete both~$v$ and~$u$, and decrease~$k$ by one. %
\end{rrule}

\begin{lemma} %
	\label{lm:wbdd_deg_rules}
	\cref{rr:wbdd_low_weight_rule,rr:wbdd_high_weight_rule,rr:wbdd_deg_0,rr:wbdd_deg_1} are correct and can be applied exhaustively in~$\bigO(n+m)$ time.
\end{lemma}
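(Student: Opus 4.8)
The plan is to verify the two claims of \cref{lm:wbdd_deg_rules} separately: correctness of each rule (i.e., that applying it yields an equivalent instance of \wbdd{}) and the running-time bound. For correctness, I would argue that each rule transforms an instance $(G, k, r, w)$ into an instance $(G', k', r, w')$ that is a \yes-instance if and only if the original is. The natural structure is to show, for each rule, that every solution of one instance can be converted into a solution of the other of no larger size.

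For \cref{rr:wbdd_low_weight_rule}, I would observe that a vertex $v$ with $\deg(v) + w_v < r$ can never be the obstruction: even if all its neighbors survive, its degree $\deg(v)$ is strictly below its budget $r - w_v$. Increasing $w_v$ by one preserves $\deg(v) + w_v \le r$, so the vertex still cannot be violated, and the set of valid solutions is literally unchanged. For \cref{rr:wbdd_high_weight_rule}, if $w_v > r$ then the budget $r - w_v$ is negative, so $v$ must be deleted in every solution; removing $v$ and decrementing $k$ is therefore safe, and the equivalence follows because the degree contributions of $v$ to its neighbors are irrelevant once $v$ is gone. \cref{rr:wbdd_deg_0} is immediate: an isolated vertex with $w_v = r$ has degree $0 \le r - w_v = 0$, so it never needs to be deleted and carries no edges, hence removing it changes nothing. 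The crux of the lemma is \cref{rr:wbdd_deg_1}. Here I would use the exchange argument sketched in the text: for a leaf $v$ with neighbor $u$, any solution $V'$ containing $v$ can be modified to contain $u$ instead (or, in the weight-$r$ case, to contain $u$) without increasing size, because $u$ dominates $v$ in the sense that deleting $u$ resolves every constraint that deleting $v$ would, and possibly more. Concretely, when $w_v = r - 1$, deleting $v$ and raising $w_u$ by one correctly records that $u$ now has one fewer unit of slack; I would check both directions, that a solution avoiding $v$ transfers to the reduced instance and back. When $w_v = r$, the vertex $v$ forces either $v$ or $u$ into the solution, and since choosing $u$ is never worse, we may contract by deleting both and decrementing $k$.

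The main obstacle I expect is the bookkeeping in the weight-$r$ case of \cref{rr:wbdd_deg_1}: one must argue that deleting both $v$ and $u$ while decreasing $k$ by only one is sound, which requires showing that no solution of size $k$ can avoid deleting $u$ (because then $v$ with $w_v = r$ and degree $1$ would violate its budget $r - w_v = 0$), and conversely that any solution can be assumed to take $u$ rather than $v$. The subtlety is that after deleting $u$, other neighbors of $u$ may have their degrees reduced, so I must confirm the equivalence accounts for these side effects correctly — but since the reduced instance simply removes $u$ and its incident edges, these reductions are automatically reflected in $G'$, so no separate weight adjustment for $u$'s other neighbors is needed.

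For the running time, I would note that each rule inspects only local data ($\deg(v)$ and $w_v$) and, when applied, deletes vertices or edges and updates a constant number of neighbor weights or the counter $k$. Since every application strictly decreases the number of vertices or edges (or both), and each application can be charged to the incident edges/vertices it touches, a standard argument with a queue of ``active'' vertices whose degree or weight has changed yields exhaustive application in $\bigO(n + m)$ total time. I would briefly remark that maintaining the invariant $r - \deg(v) \le w_v \le r$ established after \cref{rr:wbdd_low_weight_rule,rr:wbdd_high_weight_rule} lets us detect applicability of the degree rules in constant time per vertex.
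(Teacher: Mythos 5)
Your proposal is correct and follows essentially the same route as the paper: direct equivalence checks for \cref{rr:wbdd_low_weight_rule,rr:wbdd_high_weight_rule,rr:wbdd_deg_0}, the exchange argument (deleting $u$ dominates deleting $v$) for \cref{rr:wbdd_deg_1}, and a list of vertices to re-test (re-adding neighbors of deleted vertices) for the $\bigO(n+m)$ bound. One phrasing slip worth fixing: in the $w_v = r$ case it is not true that every solution must delete $u$ (it could delete $v$ instead); the correct statement, which your subsequent sentence in fact supplies, is that every solution must hit $\{u, v\}$ and can be exchanged into one containing $u$.
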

\begin{proof}
	Suppose~$V' \subseteq V(G)$ is a solution for an instance of \wbdd{} and consider a vertex~$v \in V(G)$.
	Suppose that~$\deg(v) + w_v \le r$.
	Then, $V'$ remains a solution if we replace~$w_v$ with a weight~$w'_v$, with $0 \le w'_v \le r - \deg(v)$.
	Hence, \cref{rr:wbdd_low_weight_rule} is correct.
	Suppose next that~$w_v > r$.
	Then any solution~$V'$ must contain~$v$.
	Hence, $G-\{v\}$ contains a solution of size~$k-1$ if and only if~$G$ contains a solution of size~$k$, and \cref{rr:wbdd_high_weight_rule} is correct.
	Suppose now that~$N(v) = \{u\}$.
	If~$w_v = r$, then either~$u \in V'$ or~$v \in V'$.
	As the choice of~$u \in V'$ decreases the degree of at least as many vertices as~$v \in V'$, we may always pick~$u$ into~$V'$.
	Suppose that $w_v = r-1$.
	If $v \in V'$, then $(V' \setminus \{v\}) \cup \{u\}$ is also a valid solution by the same argument;
    Otherwise, $V'$ remains a solution.
	Conversely, any solution for the resulting instance is a solution for the original instance because we increase the weight $w_u$ by one; thus \cref{rr:wbdd_deg_1} is correct.
	The correctness of \cref{rr:wbdd_deg_0} is obvious.

	To apply the reduction rules exhaustively, we have to test for each vertex if one of the reduction rules can be applied
	and repeat this test for all vertices where at least one neighbor got deleted by applying one of the rules.
	This can be done in~$\bigO(n+m)$ time by checking the weight and the degree of each vertex
	on a list of vertices that still need to be tested.
	If a vertex gets deleted, then all neighbors that are not on the list already need to be added again.
	The actual exhaustive application of the rules requires~$\bigO(n+m)$ time,
	since deleting all vertices while still maintaining a correct graph representation
	is possible in this time.
	Combining the steps leads to~$\bigO(n+m)$ time in total.
\end{proof}

We will henceforth assume that \cref{rr:wbdd_low_weight_rule,rr:wbdd_high_weight_rule,rr:wbdd_deg_0,rr:wbdd_deg_1} have been exhaustively applied.

To obtain a linear kernel for \wbdd{}, we use the following folklore result.
We call a path \emph{maximal} if both of its endpoints have degree at least three and all inner vertices have degree exactly two.
We call a cycle \emph{maximal} if at most one of its vertices has degree at least three.

\begin{lemma}[See e.g., \citep{Epstein2015,Kellerhals2020}]
	\label{lm:v3}
	Let~$G$ be a graph in which each vertex has degree at least two.
	Then, the number of vertices of degree at least three is at most $2 \fen - 2$.
	Moreover, the number of maximal paths and cycles in $G$ is at most $3 \fen - 3$.
\end{lemma}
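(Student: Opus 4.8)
The plan is to route everything through the standard identity $\fen = m - n + c$, where $m = |E(G)|$, $n = |V(G)|$, and $c$ is the number of connected components of $G$; this holds because a minimum feedback edge set leaves behind a spanning forest, which has exactly $n - c$ edges. Writing $n_3$ for the number of vertices of degree at least three and using that every vertex has degree at least two (so there are $n - n_3$ vertices of degree exactly two), the handshake lemma gives $2m = \sum_{v} \deg(v) \ge 2(n - n_3) + 3 n_3 = 2n + n_3$. Substituting $m = \fen + n - c$ and simplifying yields $n_3 \le 2\fen - 2c \le 2\fen - 2$, since $c \ge 1$. This settles the first claim immediately.

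For the count of maximal paths and cycles I would pass to the multigraph $H$ obtained from $G$ by \emph{suppressing} every degree-two vertex, i.e.\ repeatedly deleting a degree-two vertex and joining its two neighbors by a new edge (allowing loops and parallel edges). By construction the vertex set of $H$ is exactly the set of degree-$\ge 3$ vertices, and the correspondence I want to exploit is threefold: each maximal path of $G$ becomes a single edge of $H$ joining its two degree-$\ge 3$ endpoints; each maximal cycle containing exactly one vertex of degree at least three becomes a loop of $H$ at that vertex; and each maximal cycle with no vertex of degree at least three is an entire connected component of $G$ that is a cycle, which contributes nothing to $H$. Hence $|E(H)|$ equals the number of maximal paths plus the number of maximal cycles of the one-high-degree kind, while the pure-cycle components (say $C_0$ of them) account for the remaining maximal cycles, so the total we must bound is $|E(H)| + C_0$.

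The heart of the argument is then a cyclomatic-number count on $H$. Suppression changes neither the first Betti number nor the component structure on the part of $G$ that survives into $H$, so, summing over components, $|E(H)| - |V(H)| + c_H = \fen - C_0$, where $c_H$ is the number of components of $H$; indeed each pure-cycle component carries cyclomatic number one, accounting for the term $C_0$. Together with $|V(H)| = n_3$ and $c = c_H + C_0$, this rearranges to show that the total number of maximal paths and cycles is $|E(H)| + C_0 = \fen + n_3 - c_H \le 3\fen - 3c_H - 2C_0$, using the first-part bound $n_3 \le 2\fen - 2c$. I expect the only real obstacle to be bookkeeping for disconnected inputs together with the genuinely degenerate case: whenever some component contains a degree-$\ge 3$ vertex ($c_H \ge 1$) or there are at least two pure-cycle components, the slack $3c_H + 2C_0 \ge 3$ makes $3\fen - 3$ go through, and the single residual exception—when $G$ is one cycle—is dispatched by hand. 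In the typical application, where a degree-$\ge 3$ vertex exists, this collapses to the clean estimate $|E(H)| = \fen + n_3 - c_H \le \fen + (2\fen - 2) - 1 = 3\fen - 3$.
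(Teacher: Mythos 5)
The paper never proves this lemma---it is stated with a citation to prior work and no proof environment follows it---so your argument can only be compared with the standard folklore proof, which it essentially reproduces: the handshake lemma together with $\fen = m - n + c$ for the count of degree-$\ge 3$ vertices, and suppression of degree-two vertices plus a cyclomatic-number count for the count of maximal paths and cycles. Those computations are correct as far as they go: $n_3 \le 2\fen - 2c$, the edges of $H$ (including loops) are in bijection with the maximal paths and the maximal cycles containing a degree-$\ge 3$ vertex, suppression preserves the cyclomatic number, and your bookkeeping yields that the total count equals $\fen + n_3 - c_H \le 3\fen - 3c_H - 2C_0$.

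The genuine hole is exactly where you wave it away. In the residual case $c_H = 0$, $C_0 = 1$, i.e.\ $G$ is a single cycle, you claim the statement ``is dispatched by hand.'' It cannot be: a lone cycle has minimum degree two and $\fen = 1$, and under the paper's definition it is itself a maximal cycle (it has at most one---namely zero---vertices of degree at least three), so the count is $1$ while the claimed bound is $3\fen - 3 = 0$. The lemma as stated is simply false for this graph, so no case analysis closes the gap. The correct conclusion of your argument is that the bound $3\fen - 3$ holds for every graph of minimum degree two \emph{except} a single cycle---equivalently, whenever some vertex has degree at least three or the graph has at least two components---and an honest write-up should state this exception explicitly rather than assert it is handled. (This is really an imprecision in the folklore statement itself; versions in the literature typically assume a vertex of degree at least three exists. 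It is also harmless for the paper's use of the lemma, where only the $\bigO(\fen)$ magnitude of the kernel matters.)
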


By \Cref{lm:v3}, the number of vertices of degree at least three is at most $2 \fen - 2$.
It remains to bound the length of maximal paths and cycles in which each vertex has degree two.
We introduce further notation.
For a (vertex-) weighted graph $(G, w)$, let $\opt(G, w)$ denote the minimum integer $k$ such that $(G, k, r, w)$ is a \yes-instance of \wbdd{}.
If $G$ is a path, then $\opt(G, w)$ is linear-time computable by a trivial adaptation of \cref{rr:wbdd_low_weight_rule,rr:wbdd_high_weight_rule,rr:wbdd_deg_0,rr:wbdd_deg_1}.

Our algorithm works as follows:
If the graph contains a sufficiently long path $P$ of degree-two vertices, then we replace it with another weighted path $P'$ of shorter length.
The replacement path $P'$ should behave analogously to $P$ in the context of \wbdd.
Our key finding is that the \emph{characteristic matrix} of weighted degree-two paths determines the behavior of weighted paths.
Intuitively, the characteristic matrix captures the increase in the optimal solution size when a subset of the four outermost vertices (i.e., the endpoints and each of their neighbors) is included.

\begin{definition}
	\label{def:charm}
	For an integer $\ell \ge 5$,  let $\mathcal{P}_\ell$ denote the collection of weighted paths $(P, w)$ on $\ell$ vertices $v_1, \dots, v_\ell$ such that $w_{v_1} = w_{v_\ell} = r - 1$ and $w_{v_i} \in \{ r - 2, r - 1, r \}$ for each $i \in [2, \ell - 1]$.
	For a weighted path $(P, w) \in \mathcal{P}_{\ell}$, the \emph{characteristic matrix} of $(P, w)$ is a $3 \times 3$ matrix $M(P, w)$ such that $M(P, w)_{x, y} = s_{x, y} - \opt(P, w)$, where $s_{x, y}$ is the minimum size of a vertex set $S$ such that 
	\begin{enumerate}[(i)]
		\item $v_1 \in S$ if $x = 1$, $v_2 \in S$ and~$v_1 \not\in S$ if $x = 2$ and $v_1, v_2 \notin S$ if $x = 3$,
		\item $v_\ell \in S$ if $y = 1$, $v_{\ell - 1} \in S$ and~$v_{\ell} \not\in S$ if $y = 2$ and $v_{\ell}, v_{\ell - 1} \notin S$ if $y = 3$, and
		\item for every vertex $v$ in $P - S$, $\deg_{P - S}(v) + w_v \le r$.
	\end{enumerate}
	Here, we assume that $s_{x, y} = \infty$ if there exists no set fulfilling (i), (ii), and (iii). 
\end{definition}

Note that, given a weighted path $(P, w) \in \mathcal{P}_{\ell}$, we can compute its characteristic matrix in linear time by adapting \cref{rr:wbdd_high_weight_rule,rr:wbdd_deg_0,rr:wbdd_deg_1}.
We verified the following lemma using a computer program which enumerates the characteristic matrices of all weighted paths in $\mathcal P_\ell$, $\ell \le 7$.\footnote{The source code is made available at \url{https://git.tu-berlin.de/akt-public/bfvd-kernel}}
We remark that there are 11 distinct characteristic matrices arising from weighted paths on seven vertices.

\begin{lemma}
	\label{lemma:shorter_path_base}
	For every weighted path $(P, w) \in \mathcal{P}_{7}$, there exists a weighted path $(P', w') \in \mathcal{P}_{6}$ such that $M(P, w) = M(P', w')$.
\end{lemma}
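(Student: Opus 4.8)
The plan is to turn the statement into a finite verification that does not depend on $r$. To this end I would introduce for each vertex the \emph{capacity} $c_v \coloneqq r - w_v$. Condition (iii) of \cref{def:charm} then reads $\deg_{P - S}(v) \le c_v$ for every $v \in P - S$, while conditions (i) and (ii) only restrict membership of $v_1, v_2, v_{\ell - 1}, v_\ell$ in $S$; none of this refers to $r$ once the capacities are fixed. Since every vertex of a path has degree at most two, the endpoints (weight $r - 1$) have capacity $c = 1$ and each interior vertex (weight in $\{r-2, r-1, r\}$) has capacity $c \in \{2, 1, 0\}$. Hence a member of $\mathcal{P}_\ell$ is, as far as its characteristic matrix is concerned, nothing more than a capacity sequence $(1, c_2, \dots, c_{\ell - 1}, 1)$ with each $c_i \in \{0, 1, 2\}$, and $M(P, w)$ is a function of this sequence alone.

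With this reduction there are only $3^5 = 243$ candidate paths in $\mathcal{P}_7$ and $3^4 = 81$ in $\mathcal{P}_6$, and the lemma asserts that every matrix realized by the former is also realized by the latter. To compute $M$ for a fixed capacity sequence I would run a standard left-to-right dynamic program along the path, whose state records the deletion status of the current and previous vertices; this suffices to verify the degree constraint at each vertex once the status of its right neighbor is fixed. Running the program without boundary constraints yields $\opt(P, w)$, and rerunning it under each of the nine boundary prescriptions of (i)--(ii) yields the values $s_{x, y}$, setting $s_{x, y} = \infty$ exactly when the prescription is infeasible (for instance, when an interior capacity-$0$ vertex is forced to retain a neighbor). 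Then $M(P, w)_{x, y} = s_{x, y} - \opt(P, w)$.

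It remains to enumerate the $243$ resulting matrices and check that they are contained in the $81$-element set obtained from $\mathcal{P}_6$. Two observations keep this manageable and explain the stated collapse to $11$ distinct matrices. First, reversing a path reverses its capacity sequence and transposes its characteristic matrix, so it suffices to treat sequences up to reversal and matrices up to transpose. Second, the nine boundary categories partition all feasible sets, so $\opt(P, w) = \min_{x, y} s_{x, y}$; consequently every finite entry of $M$ is a nonnegative integer, at least one entry equals $0$, and on a path the finite entries are bounded by a small constant. These constraints sharply limit which matrices can occur.

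I expect the main difficulty to be one of presentation rather than mathematics: there is no evident structural operation that compresses an arbitrary length-$7$ path to length $6$ while preserving $M$, so the cleanest rigorous route is the exhaustive check performed by the referenced program. To make the statement humanly verifiable I would tabulate the $11$ distinct characteristic matrices arising in $\mathcal{P}_7$ together with, for each, an explicit capacity sequence in $\mathcal{P}_6$ realizing it; re-running the dynamic program above on these eleven small witnesses then certifies the lemma without relying on trust in the full enumeration.
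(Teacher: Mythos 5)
Your proposal is correct and takes essentially the same route as the paper: the paper's proof of this lemma \emph{is} an exhaustive machine verification over the weighted paths in $\mathcal{P}_7$ and $\mathcal{P}_6$ (with the source code referenced in a footnote and the same count of $11$ distinct characteristic matrices), which is exactly the computation you describe. Your explicit normalization to capacities $c_v = r - w_v$ — which makes the enumeration manifestly finite and independent of $r$ — is a point the paper leaves implicit, and your dynamic program, reversal symmetry, and witness-table suggestions merely organize that same finite check.
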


Observe that, given a weighted path $(P, w) \in \mathcal P_{7}$, we can compute a shorter weighted path $(P', w')$ such that $M(P, w) = M(P', w')$ in $\bigO(1)$ time.
With this at hand, we can show that every maximal path can be replaced by a path on at most $6$ vertices.
For this, we need some additional notation.
Let~$(G, w)$ be a graph and let~$P = (v_1, \dots, v_\ell)$ be a path in~$G$.
We call~$P - \{v_1, v_\ell\}$ the \emph{inner path} of $P$ and~$V(P - \{v_1, v_\ell\})$ its \emph{inner vertices}.
Let~$w^* \subseteq \NN^{V(P)}$ be the weight vector obtained from~$w$ by restricting it to~$V(P)$ and replacing the weights of~$v_1$ and~$v_\ell$ with $r-1$,
that is, $w^*_{v_1} = w^*_{v_\ell} = r-1$ and~$w^*_{v_i} = w_{v_i}$ for each inner vertex~$v_i$.

\begin{rrule}
	\label{rr:wbdd_path}
	Let $P = (v_1, \dots, v_{7})$ be a (not necessarily maximal) path whose inner vertices have all degree two.
	Let $(P', w') \in \mathcal{P}_{6}$ be a weighted path such that $M(P', w') = M(P, w^*)$.
	Then replace the inner path of $P$ with the inner path of $P'$ and decrease $k$ by $\opt(P, w^*) - \opt(P', w')$.
\end{rrule}

\begin{lemma}
	\label{lemma:wbdd_path}
	\Cref{rr:wbdd_path} is correct.
\end{lemma}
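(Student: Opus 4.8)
The plan is to show that the replacement preserves the optimal solution size up to the additive constant $\Delta := \opt(P, w^*) - \opt(P', w')$; since the rule decreases $k$ by exactly $\Delta$, this immediately yields the equivalence of the two instances. Write $H := G - \{v_2, \dots, v_6\}$ for the part of $G$ untouched by the rule, so that $V(H) = (V(G) \setminus V(P)) \cup \{v_1, v_7\}$ and $H$ is a common subgraph of both $G$ and $G'$; the endpoints $v_1, v_7$ keep their weights, and all external weights are identical in the two instances. The structural fact to exploit is that the inner vertices $v_2, \dots, v_6$ have degree two, so the only interaction between the inner path and $H$ runs through the four boundary vertices $v_1, v_2$ and $v_6, v_7$.

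First I would formalize this interaction by a \emph{boundary status} $(x, y) \in \{1,2,3\}^2$ of a candidate solution $S$, where $x$ records the left situation ($x=1$ if $v_1 \in S$; $x=2$ if $v_1 \notin S$ and $v_2 \in S$; $x=3$ if $v_1, v_2 \notin S$) and $y$ the symmetric right situation---exactly the cases of \Cref{def:charm}. For fixed $(x,y)$ I split $S$ into its path part $S_P = S \cap V(P)$ and its external part $S_{H^-} = S \setminus V(P)$, so $|S| = |S_P| + |S_{H^-}|$. The crucial observations are: (a) every inner vertex sees only path neighbors, so its constraint $\deg_{G-S}(v) + w_v \le r$ coincides with the path-internal constraint (iii) of \Cref{def:charm}; (b) the reweighting $w^*_{v_1} = w^*_{v_7} = r-1$ makes the path-internal constraint at $v_1, v_7$ vacuous (their path-degree is at most one), so the genuine degree constraints at $v_1, v_7$ are entirely the responsibility of $H$; and (c) those genuine constraints depend on the inner path only through $(x,y)$, namely through whether the edge $v_1 v_2$ (resp.\ $v_6 v_7$) survives, which happens exactly when $x = 3$ (resp.\ $y=3$). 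Consequently, letting $c_{\mathrm{ext}}(x,y)$ be the minimum $|S_{H^-}|$ over external sets compatible with the statuses of $v_1, v_7$ fixed by $(x,y)$---a quantity that does not depend on the inner path at all---and recalling that the minimum $|S_P|$ over path sets of status $(x,y)$ satisfying (iii) is precisely $s_{x,y}$, I obtain
\[
	\opt(G, w) = \min_{(x,y) \in \{1,2,3\}^2} \big( c_{\mathrm{ext}}(x,y) + s_{x,y}(P, w^*) \big),
\]
together with the identical identity for $G'$, using $s_{x,y}(P', w')$ in place of $s_{x,y}(P, w^*)$ and the \emph{same} $c_{\mathrm{ext}}$.

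To finish, I would substitute $s_{x,y}(P, w^*) = M(P, w^*)_{x,y} + \opt(P, w^*)$ and $s_{x,y}(P', w') = M(P', w')_{x,y} + \opt(P', w')$. Because the rule chooses $P'$ with $M(P', w') = M(P, w^*)$, the two matrices agree entrywise (including the $\infty$ pattern, which merely marks infeasible statuses), so $s_{x,y}(P, w^*) - s_{x,y}(P', w') = \Delta$ for every $(x,y)$. Pulling the constant $\Delta$ out of the minimum gives $\opt(G, w) = \opt(G', w') + \Delta$, which is exactly the claimed invariance, and hence $\opt(G,w) \le k \iff \opt(G',w') \le k - \Delta$.

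The main obstacle is establishing the decomposition identity rigorously. Restricting any global solution to its path part and external part of the common status is routine and gives the easy inequality. The reverse direction requires gluing: a path-optimal set and an external-optimal set sharing the same status $(x,y)$ must combine into a valid global solution, and one must verify that no constraint is violated at the seam. This is where the bookkeeping of (a)--(c) must be airtight---in particular that $v_1, v_7$ are counted once (in $S_P$, never in $c_{\mathrm{ext}}$) and that the edge $v_1 v_2$ (resp.\ $v_6 v_7$) is charged consistently to the degree of $v_1$ (resp.\ $v_7$) on the external side while being ignored on the path side. Once this decoupling is verified, the algebra with the characteristic matrices is immediate.
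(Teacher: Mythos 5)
Your proposal is correct and is essentially the paper's own argument: both proofs classify solutions by the boundary status $(x,y)$ on $\{v_1,v_2,v_6,v_7\}$, exploit the fact that the degree-two inner vertices (and the re-weighted endpoints, whose path-internal constraints under $w^*$ are vacuous) make the inner path interact with the rest of the graph only through that status, and then invoke $M(P,w^*)=M(P',w')$ to trade the path part for one whose size differs by exactly $\Delta=\opt(P,w^*)-\opt(P',w')$. The only difference is packaging: the paper runs a two-directional solution exchange, replacing $S\cap\{v_2,\dots,v_6\}$ by an optimal status-$(x,y)$ set $Q$ for $P'$ and verifying the seam conditions at $v_1$ and $v_7$ directly, whereas you derive the additive optimum identity $\opt(G,w)=\opt(G',w')+\Delta$ from the explicit decomposition $\min_{x,y}\bigl(c_{\mathrm{ext}}(x,y)+s_{x,y}\bigr)$; the gluing step you flag as the main obstacle is precisely the seam verification the paper carries out in its conditions (1) and (2).
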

\begin{proof}
	Let~$\mathcal I = (G, k, r, w)$ be the instance of \wbdd{}, which contains a path $P = (v_1, \dots, v_{7})$ as described in \cref{rr:wbdd_path},
	and let~$\mathcal I'$ be the instance of \wbdd{} obtained from executing \cref{rr:wbdd_path}.
	First, note that the existence of a weighted path $(P', w') \in \mathcal{P}_{\ell}$ in \Cref{rr:wbdd_path} is guaranteed by \Cref{lemma:shorter_path_base}. 
	Suppose that $\mathcal{I}$ has a solution $S$.
	Let
	\[
		x = \begin{cases}
		1 & \text{if } v_1 \in S,\\
		2 & \text{if } v_1 \notin S, v_2 \in S,\\
		3 & \text{if } v_1, v_2 \not\in S,
		\end{cases}
		\quad \text{and} \quad
		y = \begin{cases}
			1 & \text{if } v_7 \in S,\\
			2 & \text{if } v_6 \in S, v_7 \not\in S,\\
			3 & \text{if } v_6, v_7 \not\in S.
		\end{cases}
	\]
	Note that $|S \cap \{ v_1, \dots, v_{7} \}| \ge \opt(P, w^*) + M(P, w^*)_{x, y}$ by \Cref{def:charm}.
	In particular, it holds that $M(P, w^*)_{x, y} \ne \infty$.
	By the assumption that $M(P, w^*)= M(P', w')$, there exists a subset $Q$ of vertices in $P' = (v_1', \dots, v'_{6})$ such that
	\begin{enumerate}[(i)]
		\item $v_1' \in Q$ if $x = 1$, $v_2' \in Q$ and~$v_1' \not\in Q$ if $x = 2$, and $v_1', v_2' \notin Q$ if $x = 3$,
		\item $v_6' \in Q$ if $y = 1$, $v_{5}' \in Q$ and~$v_{6}' \not\in Q$ if $y = 2$, and $v_{5}', v_{6}' \notin Q$ if $y = 3$, 
		\item for every vertex $v$ in $P' - Q$, $\deg_{P' - Q}(v) + w_v \le r$, and
		\item $|Q| = \opt(P', w') + M(P, w^*)_{x, y}$.
	\end{enumerate}

	We claim that $S' = (S \setminus \{ v_2, \dots, v_{6} \}) \cup (Q \setminus \{ v_1', v_{6}' \})$ is a solution of $\mathcal{I}'$.
	Since $|S \cap \{ v_1, v_{7} \}| = |Q \cap \{ v_1', v_{6}' \}|$, we have 
	\begin{align*}
		|S'|
		&= |S| - |S \cap \{ v_2, \dots, v_{6} \}| + |Q \setminus \{ v_1', v_{6}' \}| \\
		&= |S| - \big(|S \cap \{ v_2, \dots, v_{6} \}| + |S \cap \{ v_1, v_{7} \}|\big)+ \big(|Q \setminus \{ v_1', v_{6}' \}| + |Q \cap \{ v_1', v_{6}' \}|\big) \\
		&= |S| - |S \cap \{ v_1, \dots, v_{7} \}| + |Q| \\
		&\le k - \big(\opt(P, w^*) + M(P, w^*)_{x, y}\big) + \big(\opt(P', w') + M(P, w^*)_{x, y}\big) \\
		&= k - \big(\opt(P, w^*) - \opt(P', w')\big).
	\end{align*}
	To verify that $S'$ is a solution, it suffices to show that
	(1) $v_1 \in S'$ or $S'$ contains at least $\deg(v_1) + w_{v_1} - r$ neighbors of $v'_1$ in $G'$ and
	(2) $v_{7} \in S'$ or $S'$ contains at least $\deg(v_{7}) + w_{v_{7}} - r$ neighbors of $v_{7}$ in $G'$.
	(Note that~$G'$ still contains the ``original'' endpoints $v_1$ and~$v_7$ of~$P$.)
	We only prove (1), since (2) can be shown analogously.
	If $x = 1$, then (1) clearly holds as $v_1 \in S$.
	If $x = 2$, then $v_1 \notin S$ and $v_2 \in S$;
	thus $S \setminus \{v_2, \dots, v_6\}$ contains at least $\deg(v_1) + w_{v_1} - r - 1$ neighbors of $v_1$,
	and as $S \setminus \{v_2, \dots, v_6\} \subseteq S'$ and~$v'_2 \in S'$, the set~$S'$ contains at least $\deg(v_1) + w_{v_1} - r$ neighbors of $v_1$.
	If $x = 3$, then $v_1, v_2 \notin S$;
	thus $S \setminus \{v_2, \dots, v_6\}$ contains at least~$\deg(v_1) + w_{v_1} - r$ neighbors of~$v_1$,
	and as $S \setminus \{v_2, \dots, v_6\} \subseteq S'$, the same holds for~$S'$.

	The other direction can be shown analogously because the proof of the forward direction does not rely on the fact that $P'$ is shorter than $P$.
\end{proof}

Note that we can apply \Cref{rr:wbdd_path} exhaustively in linear time since we have at most $|V(G)|$ applications of \Cref{rr:wbdd_path}, each of which take $\bigO(1)$ time to compute.

\begin{proposition} %
	\label{prop:wbdd_kernel}
	For constant $r$, \wbdd{} has a kernel of size $\bigO(\fen)$.
\end{proposition}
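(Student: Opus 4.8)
The plan is to assemble the reduction rules and the structural bound of \cref{lm:v3} into a two-phase kernelization and then translate a bound on the number of vertices into a bound on the whole instance. First I would apply \cref{rr:wbdd_low_weight_rule,rr:wbdd_high_weight_rule,rr:wbdd_deg_0,rr:wbdd_deg_1} exhaustively, which by \cref{lm:wbdd_deg_rules} takes $\bigO(n+m)$ time and leaves no isolated vertex (by \cref{rr:wbdd_deg_0}) and no degree-one vertex (by \cref{rr:wbdd_deg_1}); hence every surviving vertex has degree at least two. I would then apply \cref{rr:wbdd_path} exhaustively. The key point of this second phase is that it preserves the minimum-degree-two invariant: the rule keeps the two outer vertices $v_1,v_7$ together with all their external incidences and merely shortens the interior, whose replacement vertices again form a degree-two path with weights in $\{r-2,r-1,r\}$, so none of the earlier degree rules becomes applicable again.

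Next I would invoke \cref{lm:v3} on the fully reduced graph. Since every vertex has degree at least two, there are at most $2\fen-2$ vertices of degree at least three and at most $3\fen-3$ maximal paths and cycles; here $\fen$ may be taken to be the feedback edge number of the reduced instance, which never exceeds that of the input because the reductions only delete vertices and edges or shorten paths, neither of which increases $\fen$. It then remains to bound the degree-two vertices. Every such vertex lies on a maximal path or a maximal cycle, and I claim each of these carries only $\bigO(1)$ of them: if a maximal path (or cycle) contained five consecutive degree-two vertices, they would serve as the interior $v_2,\dots,v_6$ of a seven-vertex subpath whose inner vertices have degree two, so \cref{rr:wbdd_path} would still apply, contradicting exhaustiveness. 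Thus each of the $\bigO(\fen)$ maximal paths and cycles contributes $\bigO(1)$ degree-two vertices, giving $n=\bigO(\fen)$.

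Finally I would convert the vertex bound into a bound on the full instance. Because the reduced graph has feedback edge number at most $\fen$, its number of edges satisfies $m\le n-1+\fen=\bigO(\fen)$. Each surviving vertex carries a weight in $\{0,\dots,r\}$, which for constant $r$ needs $\bigO(1)$ space, and $k$ can be capped at $n$ (an instance with $k\ge n$ is a trivial yes-instance after the reductions), so $k=\bigO(\fen)$ as well. Altogether the encoded instance together with its parameter has size $\bigO(\fen)$, as required, and every step above runs in polynomial (indeed near-linear) time.

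I expect the main subtlety to be the cycle case in the middle paragraph: I would need to argue that a maximal cycle, including a connected component that is a single cycle of degree-two vertices, also admits a valid seven-vertex subpath to which \cref{rr:wbdd_path} can be applied, and to confirm that \cref{lemma:wbdd_path} genuinely covers this situation where the outer vertices $v_1,v_7$ have degree two rather than degree at least three. Once the minimum-degree-two invariant and the ``no five consecutive degree-two vertices'' property are in place, the remainder is bookkeeping.
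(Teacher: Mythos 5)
Your proposal is correct and follows essentially the same route as the paper's proof: exhaustively apply the weight/degree rules together with \cref{rr:wbdd_path}, invoke \cref{lm:v3} to bound the vertices of degree at least three and the number of maximal paths and cycles, argue each such path or cycle has $\bigO(1)$ length after exhaustion, and note that for constant $r$ each weight needs only $\bigO(1)$ bits. The cycle subtlety you flag is real but is glossed over by the paper in the same way (it simply asserts that each maximal path or cycle has length at most eleven), and your explicit inclusion of \cref{rr:wbdd_low_weight_rule} (needed so inner weights lie in $\{r-2,r-1,r\}$, making \cref{rr:wbdd_path} applicable) and of the trivial cap $k \le n$ only makes the argument more careful than the original.
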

\begin{proof}
	We claim that after we apply \cref{rr:wbdd_high_weight_rule,rr:wbdd_deg_0,rr:wbdd_deg_1,rr:wbdd_path} exhaustively, we have an instance where the graph is of size $\bigO(\fen)$.
	Note that \Cref{lm:wbdd_deg_rules,lemma:wbdd_path} establish the correctness of our rules.
	Moreover, we can apply these rule in linear time.
	Since \cref{rr:wbdd_high_weight_rule,rr:wbdd_deg_0,rr:wbdd_deg_1} delete all vertices of degree at most one, we have at most $2\fen - 2$ vertices of degree at least three by \Cref{lm:v3}.
	Moreover, we have at most $3 \fen - 3$ maximal paths and cycles whose internal vertices have degree two.
	By \Cref{rr:wbdd_path}, such a path or cycle is of length at most eleven.
	Since each degree-two vertex and each edge is contained in such a maximal path or cycle, the graph is of size $\bigO(\fen)$.
	Finally, note that we need $\bigO(1)$ bits to encode each vertex weight.
\end{proof}

\paragraph{Removing weights.}
Towards showing that \BDD{} has a kernel of size $\bigO(\fen^2)$, we use the following reduction rule to ensure that the weight of every vertex is at most $\bigO(\fen)$.

\begin{rrule}
	\label{rr:weight}
	If $w_v > 0$ for every vertex $v$, then decrease each $w_v$ by one and decrease $r$ by one.
\end{rrule}

\begin{lemma} %
	\label{lemma:rr:weight}
	\Cref{rr:weight} is correct.
\end{lemma}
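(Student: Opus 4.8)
The plan is to prove something slightly stronger than equivalence of the two instances, namely that \Cref{rr:weight} does not change the set of feasible solutions at all. Write $\mathcal I = (G, k, r, w)$ for the input instance, in which $w_v \ge 1$ for every $v \in V(G)$ (the precondition of the rule), and write $\mathcal I' = (G, k, r - 1, w')$ for the reduced instance, where $w'_v = w_v - 1$. A preliminary point I would check is that $\mathcal I'$ is well-formed: since $w_v > 0$, we have $w'_v = w_v - 1 \ge 0$, so the reduced weights remain admissible.

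The key observation is that the deletion constraint imposed on a surviving vertex depends only on the difference $r - w_v$, and this difference is invariant under the rule. Concretely, I would fix an arbitrary set $V' \subseteq V(G)$ and an arbitrary surviving vertex $v \in V(G) \setminus V'$. In $\mathcal I$ the constraint on $v$ reads $\deg_{G - V'}(v) \le r - w_v$, whereas in $\mathcal I'$ it reads $\deg_{G - V'}(v) \le (r - 1) - w'_v$. Because
\[
  (r - 1) - w'_v = (r - 1) - (w_v - 1) = r - w_v,
\]
the two inequalities are literally the same constraint. Since $V'$ and the graph $G - V'$ are identical objects in both instances, this equivalence holds simultaneously for every vertex of $V(G) \setminus V'$.

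From this I would conclude the lemma directly: a set $V'$ with $|V'| \le k$ certifies $\mathcal I$ as a \yes-instance of \wbdd{} if and only if it certifies $\mathcal I'$, because the cardinality bound $k$ is untouched and all per-vertex degree constraints coincide. Hence $\mathcal I$ is a \yes-instance exactly when $\mathcal I'$ is, which is the claimed correctness. I do not expect any genuine obstacle here, as the argument is purely definitional; the only step meriting a word of care is confirming that the reduced weights stay non-negative, which is precisely what the hypothesis $w_v > 0$ for all $v$ guarantees. (Applying the rule itself clearly costs $\bigO(n)$ time, decrementing $r$ and each of the $n$ weights once.)
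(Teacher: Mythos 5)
Your proof is correct and follows essentially the same reasoning as the paper's: both rest on the observation that the per-vertex constraint depends only on the invariant quantity $r - w_v$ (the paper phrases it as ``any solution must contain $v$ or at least $\deg(v) + w_v - r = \deg(v) + (w_v - 1) - (r - 1)$ of its neighbors''). Your additional check that the reduced weights stay non-negative is a harmless formality beyond the paper's one-line argument.
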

\begin{proof}
	By definition, for every vertex $v$, any solution must contain~$v$ or at least $\deg(v) + w_v - r = \deg(v) + (w_v - 1) - (r - 1)$ of its neighbors.
\end{proof}

\begin{proposition} %
	\label{thm:bdd_fen}
	\BDD{} admits a kernel of size $\bigO(\fen^2)$.
\end{proposition}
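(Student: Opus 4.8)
The plan is to treat the input \BDD{} instance $(G,k,r)$ as a \wbdd{} instance with $w\equiv 0$, shrink the graph to $\bigO(\fen)$ vertices using the rules behind \Cref{prop:wbdd_kernel}, then bring the threshold $r$ (and hence all weights) down to $\bigO(\fen)$ with \cref{rr:weight}, and finally encode each remaining weight by attaching that many pendant vertices. Concretely, I first apply \cref{rr:wbdd_low_weight_rule,rr:wbdd_high_weight_rule,rr:wbdd_deg_0,rr:wbdd_deg_1,rr:wbdd_path} exhaustively. Their correctness (\Cref{lm:wbdd_deg_rules,lemma:wbdd_path}) does not depend on $r$ being constant, and the vertex-count analysis in the proof of \Cref{prop:wbdd_kernel} shows that the resulting graph has $\bigO(\fen)$ vertices regardless of $r$: after the degree rules every vertex has degree at least two, so by \Cref{lm:v3} there are $\bigO(\fen)$ vertices of degree at least three and $\bigO(\fen)$ maximal paths and cycles, each shortened to constant length by \cref{rr:wbdd_path}. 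The only quantity that is not yet polynomial in $\fen$ is the weights, which may be as large as $r$.

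Next I would bound the weights via \cref{rr:weight}. After \cref{rr:wbdd_low_weight_rule} every vertex satisfies $w_v+\deg(v)\ge r$, and \cref{rr:weight} preserves this inequality (as well as $w_v\le r$) while it decreases all weights and $r$ in lockstep until some weight becomes $0$. In the resulting instance let $u$ be a vertex with $w_u=0$ and let $r'$ be the new threshold; since the reduced graph has $\bigO(\fen)$ vertices, its maximum degree is $\bigO(\fen)$, so $r'\le w_u+\deg(u)=\deg(u)=\bigO(\fen)$. Consequently every weight obeys $w_v\le r'=\bigO(\fen)$. Correctness of this step is \Cref{lemma:rr:weight}, and since \cref{rr:weight} leaves the graph untouched and maintains $w_v+\deg(v)\ge r'$ and $w_v\le r'$, none of the earlier rules become applicable again.

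Finally I would remove the weights. For each vertex $v$ I attach $w_v$ fresh pendant (degree-one) vertices to $v$, obtaining an unweighted graph $G'$, and output the \BDD{} instance $(G',k,r')$. Since each vertex carries $\bigO(\fen)$ pendants and there are $\bigO(\fen)$ vertices, this adds $\bigO(\fen^2)$ vertices; the pendants create no cycles, so $\fen$ is unchanged, and the whole construction runs in polynomial time. For equivalence I would argue that in any solution $V'$ of $(G',k,r')$ we may assume no pendant is deleted: if a pendant attached to $v$ lies in $V'$, then either $v\in V'$ already, in which case the pendant can be dropped from $V'$, or we may delete $v$ instead of the pendant, which dominates. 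With no pendant deleted, every $v\notin V'$ satisfies $\deg_{G'-V'}(v)=\deg_{G-V'}(v)+w_v$, so $\deg_{G'-V'}(v)\le r'$ is equivalent to $\deg_{G-V'}(v)\le r'-w_v$, exactly the \wbdd{} constraint. Conversely, a \wbdd{} solution leaves every pendant with degree at most $1\le r'$ (or $r'=0$, in which case all weights are $0$ and no pendants exist), hence it is a \BDD{} solution of $G'$.

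The main obstacle is the weight bound: without controlling $r$ the pendant gadget would blow up uncontrollably, so the crux is combining the $\bigO(\fen)$ vertex bound (which forces maximum degree $\bigO(\fen)$, hence $r'=\bigO(\fen)$ once some weight is driven to $0$) with \cref{rr:weight}. The pendant-equivalence argument, though routine, must be stated carefully so that deleted pendants can be assumed away. Assuming as usual that $k$ is at most the number of vertices (otherwise the instance is trivially positive), the final instance has $\bigO(\fen^2)$ vertices and edges, giving the claimed kernel of size $\bigO(\fen^2)$.
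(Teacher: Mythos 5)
Your proposal is correct and follows essentially the same route as the paper: reduce to \wbdd{}, shrink the graph to $\bigO(\fen)$ vertices via \cref{rr:wbdd_low_weight_rule,rr:wbdd_high_weight_rule,rr:wbdd_deg_0,rr:wbdd_deg_1,rr:wbdd_path}, use \cref{rr:weight} together with the zero-weight vertex and the $\bigO(\fen)$ degree bound to force $r, w_v \in \bigO(\fen)$, and then re-encode weights as pendant neighbors to return to \BDD{}. The only difference is that you spell out details the paper leaves implicit (the invariant $w_v + \deg(v) \ge r$ under \cref{rr:weight} and the exchange argument showing pendants need never be deleted), which strengthens rather than changes the argument.
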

\begin{proof}
	First, we show that after applying all our reduction rules, the weight of every vertex is at most $\bigO(\fen)$.
	Since \Cref{rr:weight} has been applied exhaustively, there exists a vertex $v \in V$ with $w_v = 0$.
	If $r \ge \deg(v) + w_v = \deg(v)$, then \Cref{rr:wbdd_deg_0} was not exhaustively applied.
	Thus $r < \deg(v)$, which by \Cref{lm:v3} is in $\bigO(\fen)$.
	We have applied \Cref{rr:wbdd_high_weight_rule}, and hence for each vertex $v \in V(G)$, $w_{v} \le r \in \bigO(\fen)$.
	An instance $(G, k, r, w)$ of \wbdd{} is equivalent to an instance $(G', k, r)$ of \BDD{}, where $G'$ is a graph obtained by adding to $w_v$ neighbors to every vertex $v$.
	Thus, we obtain a kernel of size $\bigO(\fen^2)$.
\end{proof}

It is straightforward to adapt our algorithm to \BFVD{} with $i \ge 2$:

\begin{rrule}
	\label{rr:fen:bfvd-1}
	If $v$ is a vertex with $\deg(v) = 1$, then delete $v$.
\end{rrule}

\Cref{rr:fen:bfvd-1} is correct since a degree-one vertex is not part of any biclique when $i \ge 2$.

\begin{rrule}
	\label{rr:fen:bfvd-2}
	If $(v_1, v_2, v_3, v_4, v_5)$ is a path on five vertices with $\deg(v_i) = 2$ for each $i \in \{ 2, 3, 4 \}$, then delete~$v_3$.
\end{rrule}

\begin{lemma}
	\Cref{rr:fen:bfvd-2} is correct.
\end{lemma}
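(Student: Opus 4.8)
The plan is to reduce the correctness of \Cref{rr:fen:bfvd-2} to the foundational \Cref{rr:part_of_a_biclique}: I will show that, when $i \ge 2$, the deleted vertex $v_3$ cannot participate in any biclique $K_{i,j}$. Once this is established, deleting $v_3$ is safe. Indeed, if $v_3$ lies in no biclique, then neither do its incident edges $v_2 v_3$ and $v_3 v_4$, since any biclique using such an edge would contain $v_3$ on one of its sides. Consequently the set of bicliques of $G$ coincides with that of $G - v_3$, and a vertex set of size at most $k$ intersects all bicliques in $G$ if and only if it does so in $G - v_3$ (in the forward direction one discards $v_3$ from a solution, which is harmless as no biclique needs it).

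The key structural fact I would record first is that the degree constraints pin down the relevant neighborhoods: since $\deg(v_2) = \deg(v_3) = \deg(v_4) = 2$ we have $N(v_2) = \{v_1, v_3\}$, $N(v_3) = \{v_2, v_4\}$, and $N(v_4) = \{v_3, v_5\}$. Because $v_1, \dots, v_5$ are five distinct vertices, this yields the crucial identity $N(v_2) \cap N(v_4) = \{v_3\}$.

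I would then argue that $v_3$ belongs to no biclique by a two-case analysis on which side of a hypothetical biclique $(S, T)$ contains it. If $v_3 \in S$, then $v_3$ is adjacent to all $j \ge i \ge 2$ vertices of $T$, so $T \subseteq N(v_3) = \{v_2, v_4\}$ forces $j = 2$, $T = \{v_2, v_4\}$, and hence $i = 2$; the remaining vertex of $S$ must then lie in $N(v_2) \cap N(v_4) = \{v_3\}$, which is impossible. If instead $v_3 \in T$, then $S \subseteq N(v_3) = \{v_2, v_4\}$ forces $i = 2$ and $S = \{v_2, v_4\}$, so every vertex of $T$ lies in $N(v_2) \cap N(v_4) = \{v_3\}$, giving $|T| = 1 < j$, again a contradiction. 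Hence no biclique $K_{i,j}$ contains $v_3$, and \Cref{rr:part_of_a_biclique} justifies its deletion.

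The only genuinely delicate point is making the case analysis exhaustive while tracking where the hypothesis $i \ge 2$ is used: the argument collapses for $i = 1$, where $v_3$ would be the centre of a star $K_{1,2}$, which is precisely why this rule is confined to the $i \ge 2$ case. Everything else is bookkeeping with the fixed two-element neighborhoods, so I expect no serious obstacle beyond confirming that both sides of any putative biclique are blocked by the identity $N(v_2) \cap N(v_4) = \{v_3\}$.
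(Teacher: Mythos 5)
Your proof is correct and follows essentially the same route as the paper: both arguments hinge on the identity $N(v_2) \cap N(v_4) = \{v_3\}$ to show that $v_3$ lies in no biclique $K_{i,j}$ when $i \ge 2$, whence deletion is safe by \Cref{rr:part_of_a_biclique}. The only differences are presentational — the paper handles the two cases via a ``without loss of generality, $v_3 \in S$'' (legitimate since both sides have size at least $2$), where you spell them out separately, and you make explicit the (correct but implicit in the paper) step that a vertex in no biclique can be deleted without changing the answer.
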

\begin{proof}
	It suffices to show that $v_3$ is not part of any biclique $K_{i,j}$ with $i \ge 2$.
	Suppose that $v_3$ is part of a biclique $(S, T)$ with $|S| \ge 2$ and $|T| \ge 2$.
	Without loss of generality, assume that $v_3 \in S$.
	Then the only two neighbors of $v_3$, namely, $v_2$ and $v_4$ must be contained in $T$.
	Note, however, that $N(v_2) \cap N(v_4) = \{ v_1, v_3 \} \cap \{ v_3, v_5 \} = \{ v_3 \}$, implying that $|S| = 1$, a contradiction.
\end{proof}

One can apply \Cref{rr:fen:bfvd-1,rr:fen:bfvd-2} exhaustively in linear time.
By \Cref{lm:v3}, we have the following:

\begin{proposition}
	\label{prop:bfvd-fen}
	\BFVD{} admits a kernel of size $\bigO(\fen)$ for $i \ge 2$.
\end{proposition}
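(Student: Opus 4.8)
The plan is to show that, after exhaustively applying \Cref{rr:fen:bfvd-1,rr:fen:bfvd-2}, the graph has $\bigO(\fen)$ vertices and $\bigO(\fen)$ edges, which immediately yields the kernel bound since we may encode $i$, $j$, and $k$ in $\bigO(\log n)$ bits (and these are trivially at most $\bigO(\fen)$ after the reductions, or can be capped). The whole argument rests on \Cref{lm:v3}, so the first task is to bring the graph into a form where every vertex has degree at least two.

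First I would observe that \Cref{rr:fen:bfvd-1} removes all degree-one vertices, and since removing a degree-one vertex cannot create new degree-zero or degree-one vertices unless its neighbor had degree two (in which case that neighbor becomes degree one and is handled in the next pass), exhaustive application leaves a graph of minimum degree at least two. (Isolated vertices are likewise not part of any biclique for $i \ge 2$ and can be discarded, so \wilog{} we reach minimum degree two.) By \Cref{lm:v3}, the number of vertices of degree at least three is at most $2\fen - 2$, and the number of maximal paths and cycles is at most $3\fen - 3$.

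The core step is to bound the length of each maximal path (and cycle) of degree-two vertices. Here \Cref{rr:fen:bfvd-2} does the work: whenever a maximal path contains an inner sub-path $(v_1, v_2, v_3, v_4, v_5)$ with $v_2, v_3, v_4$ all of degree two, we may delete the middle vertex $v_3$, whose correctness was established just above the statement. I would argue that after exhaustive application no maximal path of degree-two vertices can have more than a constant number of internal vertices: if it did, it would contain a run of at least three consecutive degree-two vertices flanked on both sides by further path vertices, exposing a configuration matching \Cref{rr:fen:bfvd-2}. Concretely, any maximal path whose inner vertices all have degree two and whose internal part has length $\ge 5$ (measured so that a window of five consecutive degree-two-interior vertices appears) still admits a reduction, so exhaustively reduced maximal paths have $\bigO(1)$ internal vertices; the same bound holds for maximal cycles by the same windowing argument.

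Combining these, the total vertex count is $(2\fen-2) + (3\fen-3)\cdot \bigO(1) = \bigO(\fen)$, and since the graph is sparse (it has feedback edge number $\fen$, hence at most $n - 1 + \fen = \bigO(\fen)$ edges), the instance has size $\bigO(\fen)$. The main obstacle I anticipate is the precise bookkeeping showing that \Cref{rr:fen:bfvd-2} truly leaves only $\bigO(1)$ degree-two vertices per maximal path or cycle — in particular handling the boundary cases where the flanking vertices $v_1$ or $v_5$ coincide with the degree-$\ge 3$ endpoints of the maximal path, or where a short maximal cycle wraps around. These are routine but need care to confirm that no reducible window is overlooked and that the reductions terminate with a constant per-path bound, which is exactly what \Cref{lm:v3} then multiplies up to $\bigO(\fen)$.
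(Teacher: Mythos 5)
Your proposal is correct and follows essentially the same route as the paper, whose proof is just the terse observation that \cref{rr:fen:bfvd-1,rr:fen:bfvd-2} can be applied exhaustively in linear time and that \cref{lm:v3} then bounds the graph size. You merely spell out the counting (degree-$\ge 3$ vertices plus $\bigO(1)$ internal vertices per maximal path or cycle) and some details the paper leaves implicit, such as discarding isolated vertices before invoking \cref{lm:v3} and capping $i$, $j$, $k$.
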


\Cref{thm:bdd-fen-kernel} follows from \Cref{thm:bdd_fen,prop:bfvd-fen}.

\section{Conclusion}

In this work, we introduced the \textsc{Biclique Free Vertex Deletion} \textsc{(BFVD)} problem and investigated its parameterized complexity with respect to structural parameters.
We showed that \BFVD{} is FPT for $d + k$, where $d$ is the degeneracy and $k$ is the solution size.
This implies fixed-parameter tractability for the feedback vertex number $\fvn$ when $i \ge 2$.
One natural question is whether the problem also admits a polynomial kernel for $\fvn$.
Recently, it was shown that all maximal bicliques can be enumerated efficiently on graphs of bounded weak closure~\citep{DBLP:journals/algorithmica/KoanaKS23a}, which is a superclass of degenerate graphs.
Is \BFVD{} also FPT when parameterized by the weak closure and the solution size?

\bibliographystyle{abbrvnat}
\bibliography{bibliography}

\end{document}